\documentclass[10pt, twocolumn]{article}
\usepackage[letterpaper, right=0.75in, left=0.75in, top=1in, bottom=1in]{geometry}
\usepackage{setspace}
\setstretch{1.1}

\usepackage[affil-it]{authblk}
\usepackage{abstract}

\setlength{\columnsep}{20pt}
\usepackage{titlesec}
\titleformat*{\section}{\large\bfseries}
\titleformat*{\subsection}{\normalsize\bfseries}

\usepackage[ruled,vlined,linesnumbered,nosemicolon]{algorithm2e} 
\SetKwInput{KwInit}{Initialize}

\usepackage{pifont}
\newcommand{\cmark}{\ding{51}}%
\newcommand{\xmark}{\ding{55}}%

\usepackage{amsmath}
\usepackage{amssymb}
\usepackage{bm}
\usepackage{enumerate}
\mathchardef\mhyphen="2D 
\setlength\parindent{0pt}
\usepackage[titletoc,toc,title]{appendix}
\usepackage{footnote}
\usepackage{graphicx}
\usepackage{booktabs} 
\usepackage{enumitem}

\usepackage[font=small,labelfont=bf]{caption}
\usepackage[sort&compress]{natbib}

\usepackage{float}
\usepackage{url}
\usepackage[bottom]{footmisc} 
\setlength\parindent{15pt}

\usepackage{amsthm}
\newtheorem{theorem}{Theorem}[section]

\newtheorem{corollary}[theorem]{Corollary}
\newtheorem{proposition}[theorem]{Proposition}
\theoremstyle{plain}

\theoremstyle{definition}

\usepackage{wrapfig}
\usepackage[normalem]{ulem}

\usepackage{multirow}

\usepackage{mleftright}

\usepackage[dvipsnames]{xcolor}
\definecolor{revBlue}{RGB}{10,70,225}
\definecolor{dodgerblue}{RGB}{20, 80, 200}
\definecolor{salmon}{RGB}{240, 75, 75}
\definecolor{forestgreen}{RGB}{34, 139, 34}
\definecolor{goldenrod}{RGB}{220, 160, 30}
\definecolor{violet}{RGB}{185, 85, 210}
\definecolor{dimgray}{RGB}{105, 105, 105}
\definecolor{darkblue}{RGB}{0, 0, 250}

\DeclareMathOperator*{\argmin}{argmin}
\DeclareMathOperator*{\argmax}{argmax}

\makeatletter
\newcounter{manualsubequation}
\renewcommand{\themanualsubequation}{\alph{manualsubequation}}
\newcommand{\startsubequation}{%
  \setcounter{manualsubequation}{0}%
  \refstepcounter{equation}\ltx@label{manualsubeq\theequation}%
  \xdef\labelfor@subeq{manualsubeq\theequation}%
}
\newcommand{\tagsubequation}{%
  \stepcounter{manualsubequation}%
  \tag{\ref{\labelfor@subeq}\themanualsubequation}%
}
\let\subequationlabel\ltx@label
\makeatother

\usepackage{etoolbox}

\interfootnotelinepenalty=10000

\makeatletter
\patchcmd{\endalign}{\restorealignstate@}{\global\let\df@label\@empty\restorealignstate@}{}{}
\makeatother

\usepackage{tikz}
\usetikzlibrary{shapes, arrows.meta, positioning, automata, decorations.pathreplacing}

\usepackage{hyperref}
\hypersetup{
    colorlinks=true,
    linkcolor=black,
    filecolor=magenta,      
    urlcolor=revBlue,
    citecolor=black,
}

\begin{document}

\twocolumn[{%
  \begin{@twocolumnfalse}
    
    \title{Distributionally Robust End-to-End Portfolio Construction}	
    \author{Giorgio Costa\thanks{Email: gc2958@columbia.edu}\hspace*{0.25em} and Garud N. Iyengar\thanks{Email: garud@ieor.columbia.edu}}
    
    \affil{Department of Industrial Engineering and Operations Research,\\ Columbia University, New York, NY}
    
    \date{\today}
    
    \maketitle
    
    \begin{abstract}
      \noindent We propose an end-to-end distributionally robust
      system for portfolio construction that integrates the
      asset return prediction model with a distributionally robust portfolio
      optimization model. We also show how to learn
      the risk-tolerance parameter and the degree of robustness
      directly from data. End-to-end systems have an advantage
      in that information can be communicated between the
      prediction and decision layers during training, allowing
      the parameters to be trained for the final task rather
      than solely for predictive performance. However, existing
      end-to-end systems are not able to quantify and correct
      for the impact of model risk on the decision layer. Our
      proposed distributionally robust end-to-end portfolio
      selection system explicitly accounts for the impact
      of model risk. The decision layer chooses portfolios by
      solving a minimax problem where the distribution of the
      asset returns is assumed to belong to an ambiguity set
      centered around a nominal distribution. Using convex
      duality, we recast the minimax problem in a form that
      allows for efficient training of the end-to-end system. 
    \end{abstract}
    
    \textbf{Keywords}: end-to-end learning, machine learning,
    distributionally robust optimization, statistical ambiguity. 
    \bigskip
    
  \end{@twocolumnfalse}
}]
\saythanks

\section{Introduction}\label{sec:intro}

Quantitative asset management methods typically predict the 
distribution of future asset returns by a parametric model, which is then 
used as input by the decision models that construct the portfolio of asset holdings. 
This two-stage ``predict-then-optimize'' design, though intuitively appealing, is
effective only under ideal conditions, i.e. when market conditions are stationary 
and there is sufficient data, the resulting portfolios have good performance. However, 
in practice, predictions are often \emph{unreliable}, and consequently, 
the resulting portfolios have poor out-of-sample performance~\citep{chopra1993, 
merton1980estimating, best1991sensitivity, broadie1993computing}. There are two 
mitigation strategies: add a measure of risk to control variability, and allow for 
model robustness. However, these approaches involve unknown parameters 
that are hard to set in practice~\citep{bertsimas2018data}. 

In decision-making systems, we are concerned with
\emph{decision} errors instead of just \emph{prediction} errors. Although
it is well known that splitting the prediction and optimization task is
not optimal for mitigating decision errors, it is only recently that the
two steps have been combined into a ``smart-predict-then-optimize''
framework where parameters are chosen to optimize the performance 
of the corresponding decision~\citep{elmachtoub2020decision}. 
This framework has been implemented as an end-to-end system where one 
goes directly from data to decision by combining the prediction and 
decision layers, and back-propagating the gradient information 
through both decision and prediction layers during 
training~\citep{donti2017task}. In the context of portfolio construction,
such an end-to-end system allows us to learn the parameters of a
\emph{given} parametric prediction model, improving the
performance of a \emph{given} fixed portfolio selection problem. 
However, these end-to-end systems cannot accommodate robust or 
distributionally robust decision layers that provide robustness with 
respect to the prediction model, nor can they accommodate optimization 
layers with learnable parameters.  

We show that the end-to-end approach can be successfully extended to 
settings where the decision is chosen by solving a distributionally robust 
optimization problem. Introducing robustness regularizes the decision and 
improves its out-of-sample performance. This is a natural next step in the 
evolution of this approach. Although we use portfolio selection as a test 
case for the robust end-to-end framework since predictions, decisions and 
uncertainty are especially important in this problem, it will be clear from 
our model development that the approach itself can be applied to any robust 
decision problem.

\begin{figure*}[!ht]
\centering
\begin{tikzpicture}[%
    >=stealth,
    node distance=5em,
    on grid,
    auto
  ]
  	\node[draw, rectangle, minimum width=5.5em, minimum height=4em] (A) {
  	$\begin{matrix}\{\bm{x}\}_{j=t-T}^{t}\\[1.5ex] \{\bm{y}\}_{j=t-T}^{t+v} \end{matrix}$
  	};
    \node[align=center, anchor=south] (lab) at (A.north) {Inputs};
	\draw[decorate, line width=0.1em, text width=2.5em, midway, align=center, decoration={brace,mirror,raise=0.5ex}] (A.north west) -- (A.south west)  node[midway,left=1ex]{Fwd pass} ;
	\node[draw, rectangle, minimum width=10.5em, minimum height=4em, right=of A] (B) [below right=-2em and 0.75em of A.east] {
	$
	\begin{aligned}
		\hat{\bm{y}} &= \{g_{\bm{\theta}}(\bm{x}_j)\}_{j=t-T}^t\\[1ex]
		\color{dodgerblue}\bm{\epsilon} &\color{dodgerblue}= \{\bm{y}_j - \hat{\bm{y}}_j\}_{j=t-T}^{t-1}
	\end{aligned}
	$
	};
    \node[align=center, anchor=south] (lab) at (B.north) {Prediction layer};
	\draw [->,thick] (A) -- (B);
	\node[draw, rectangle, minimum width=18em, minimum height=4em, right= of B] (C) [below right=-2em and 0.75em of B.east] {
	$\displaystyle \bm{z}_{t}^* = \argmin_{\bm{z}\in\mathcal{Z}} {\color{salmon}\max_{\bm{p}\in\mathcal{P}(\delta)}}\ {\color{dodgerblue}f_{\bm{\epsilon}}(\bm{z}, {\color{salmon}\bm{p}})} - {\color{dodgerblue}\gamma}\cdot\hat{\bm{y}}_{t}^\top\bm{z}$
	};
	\node[align=center, anchor=south] (lab) at (C.north) {Decision layer};
	\draw [->,thick] (B.east) -- (C.west);
	\node[draw, rectangle, minimum width=8em, minimum height=4em, right= of C] (D) [below right=-2em and 0.75em of C.east] {$l\big(\bm{z}_{t}^*, \{\bm{y}_j\}_{j=t}^{t+v}\big)$};
	\node[align=center, anchor=south] (lab) at (D.north) {Task loss};
	\draw [->,thick] (C.east) -- (D);
	\node[draw, rectangle, text width= 4em, align=center, minimum width=5.5em, minimum height=4em, below= of A] (A2) {Update \\ $\bm{\theta}, {\color{dodgerblue}\gamma}, {\color{salmon}\delta}$
	};
	\draw[decorate, line width=0.1em, text width=2.5em, midway, align=center, decoration={brace,mirror,raise=0.5ex}] (A2.north west) -- (A2.south west)  node[midway,left=1ex]{Bwd pass} ;
	\node[draw, rectangle, minimum width=10.5em, minimum height=4em, below= of B] (B2) [below right=-2em and 0.75em of A2.east] {
	$
	\begin{aligned}
	\partial\hat{\bm{y}}_{t} &/ \partial \bm{\theta},\\ 
	{\color{dodgerblue}\partial\bm{\epsilon}}&{\color{dodgerblue}/\partial \bm{\theta}}	
	\end{aligned}
	$
	};
	\draw [<-,thick] (A2) -- (B2);
	\node[draw, rectangle, minimum width=18em, minimum height=4em, below= of C] (C2) [below right=-2em and 0.75em of B2.east] {
	$
	\begin{aligned}
	\partial\bm{z}_{t}^*/\partial \hat{\bm{y}}_{t},&\quad {\color{salmon}\partial\bm{z}_{t}^*/\partial \delta},\\
	{\color{dodgerblue}\partial\bm{z}_{t}^*/\partial \bm{\epsilon}},&\quad {\color{dodgerblue}\partial\bm{z}_{t}^*/\partial \gamma}	
	\end{aligned}
	$
	};
	\draw [<-,thick] (B2.east) -- (C2.west);
	\node[draw, rectangle, minimum width=8em, minimum height=4em, below= of D] (D2) [below right=-2em and 0.75em of C2.east] {
	$\partial l/\partial \bm{z}_{t}^*$
	};
	\draw [<-,thick] (C2.east) -- (D2);
\end{tikzpicture}
\caption{Computational graph of the end-to-end portfolio construction
  system. Each sample in the training set is composed of a batch of
  inputs--outputs $\{\bm{x}_j\}_{j=t-T}^{t}$, $\{\bm{y}_j\}_{j=t-T}^{t+v}$
  and yields a single optimal portfolio \(\bm{z}_{t}^*\). Here,
  \(f_{\bm{\epsilon}}\) is a measure of model prediction error and \(l\)
  is the task loss function. Note that the measure of task-based
  performance \(l\) from period \(t\) to period \(t+v\) is
  allowed to differ from the optimization objective in the decision
  layer. Our contributions are highlighted as follows. First, the
  components pertaining to the sample-based integration of the prediction
  error into the decision layer are shown in
  {\color{dodgerblue}blue}. Second, the robust design is shown in
  {\color{salmon}red}. Details of the notation are discussed later in
  Section \ref{sec:e2e}.} 
\label{fig:e2e}
\end{figure*}
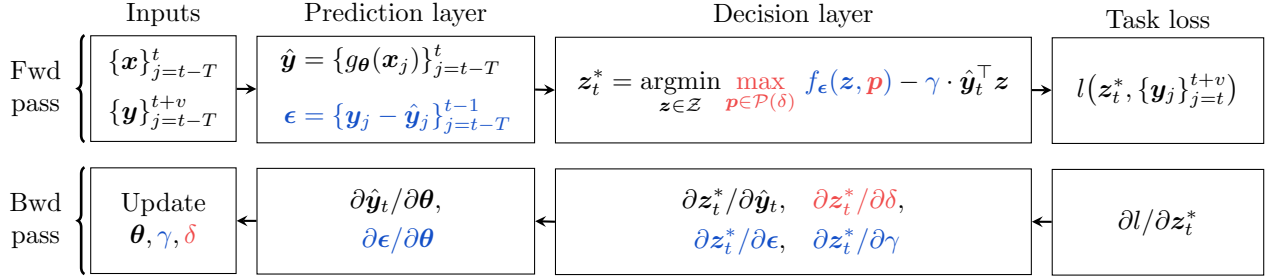

\subsection{Contributions}

Our main contribution is to show how to accommodate model robustness
within an end-to-end system in a tractable and intuitive fashion. The 
specific contributions of this paper are as follows (see also 
Figure~\ref{fig:e2e}):   

\begin{enumerate}[itemsep=-0.1em, topsep=0pt, leftmargin=*]
\item We propose an end-to-end portfolio construction system where 
    the decision layer is a \emph{distributionally} robust (DR) 
    optimization problem~(see the `Decision layer' box in 
    Figure~\ref{fig:e2e}). We show how to integrate the DR layer with 
    any prediction layer. 
  
\item The DR optimization problem requires both the point prediction 
    as well as the prediction errors as inputs to quantify and control 
    the model risk. Therefore, unlike standard end-to-end systems, we 
    provide both the point prediction, as well as a set of 
    past prediction errors as inputs to the decision layer~(see 
    the `Prediction layer' box in Figure~\ref{fig:e2e}). 
  
\item The DR layer is formulated as a minimax optimization problem where
    the objective function is a combination of the mean loss and the
    worst-case risk term that combines the impact of variability and model
    error. The worst-case risk is taken over a set of probability measures
    within a ``distance'' \(\delta\) of the empirical measure. We show that, 
    by using convex duality, the minimax problem can be recast as a minimization
    problem. In this form, the gradient of the overall task loss can be 
    back-propagated through the DR decision layer to the prediction layer.
    Moreover, this means the end-to-end system remains computationally tractable
    and can be efficiently solved using existing software. This result
    is of interest to embedding any minimax (or maximin) decision layer into an
    end-to-end system.
    
\item We show that the parameters that control the risk appetite~(\(\gamma\)) 
	and model robustness~(\(\delta\)) can be learned directly from data in our
    proposed DR end-to-end system. Learning \(\gamma\) is relatively straightforward. 
    However, showing that \(\delta\) can be learned in the end-to-end system is 
    non-trivial, and requires the use of convex duality. This step is very important 
    because setting appropriate values for these  parameters is, in practice, 
    difficult and computationally expensive~\citep{bertsimas2018data}. 
  
\item We have implemented our proposed DR end-to-end system for portfolio
    construction in Python. Source code is available at
    \href{https://github.com/Iyengar-Lab/E2E-DRO}{https://github.com/Iyengar-Lab}.  
\end{enumerate}

\subsection{Background and related work} 

Recently, there has been increasing
interest in end-to-end learning systems \citep{bengio1997using,
lecun2005off, thomas2006cognitive, elmachtoub2021smart}. The goal of an
end-to-end system is to integrate a prediction layer with the downstream
decision layer so that one estimates parameter values that minimize the
decision error (known as the `task loss'), instead of simply
minimizing the prediction error. The main technical challenge in such
systems is to back-propagate the gradient of the task loss through the
decision layer onto the prediction layer~\citep{amos2017optnet}. These
systems come in two varieties: model-free and model-based. Model-free
methods follow a `black-box' approach, and have found some success in
portfolio construction \citep{uysal2021end, zhang2020deepA}. However,
model-free methods are often data-inefficient during training. On the
other hand, model-based methods rely on some predefined structure of their
environment before model training can take place. Model-based methods have
the advantage of retaining some level of interpretability while also being
more data-efficient during training~\citep{amos2018differentiable}. 

Model-based end-to-end systems have found successful applications in asset
management and portfolio construction. In addition to the model-free system,
\citet{uysal2021end} also propose a model-based risk budgeting
portfolio construction model. \citet{butler2021integrating} propose an
end-to-end mean--variance optimization model. \citet{zhang2021universal}
integrate the convex optimization layer with a deep learning prediction
layer for portfolio construction. 

As a stand-alone tool, portfolio optimization has been criticized for its
sensitivity to model and parameter errors that leads to poor out-of-sample
performance~\citep{merton1980estimating, best1991sensitivity, chopra1993,
broadie1993computing}. Robust optimization methods that explicitly model
perturbation in the parameters of an optimization problem, and choose
decisions assuming the worst-case behavior of the parameters, have been
successfully employed to improve the performance of portfolio
selection~\citep[e.g., see][]{goldfarb2003robust, tutuncu2004robust,
fabozzi2007robust, costa2020robust}. The robust optimization approach
was subsequently extended to \emph{distributionally} robust optimization (DRO)
\citep{scarf1958min, delage2010distributionally, ben2013robust}, where the
parameters of an optimization problem are distributed according to a 
probability measure that belongs to a given ambiguity set. The DRO problem can 
be interpreted as a game between the decision-maker who chooses an action to
minimize cost, and an adversary~(i.e., nature) who chooses a
parameter distribution that maximizes the cost~\citep{neumann1928theorie}. 
The DRO approach has been implemented for portfolio optimization 
\citep[e.g., see][]{calafiore2007ambiguous, delage2010distributionally, 
costa2021datadriven}.  

We propose a model-based end-to-end portfolio selection framework where
the decision layer is 
a DR portfolio selection problem. In
keeping with existing end-to-end systems, we allow the prediction layer to
be any differentiable supervised learning model, ranging from simple
linear models to deep neural networks. However, unlike existing end-to-end
frameworks, the decision layer is a minimax problem. We show how to use
the errors from the prediction layer to construct this minimax problem. It
is well known that a differentiable optimization problem can be embedded
into the architecture of a neural network, allowing for gradient-based
information to be communicated between prediction and decision layers
\citep{amos2017optnet, donti2017task, agrawal2019differentiable, 
amos2019differentiable}. Using convex duality, we extend this result to minimax
problems, and show how to communicate the gradient information from the DR
decision layer back to the prediction layer.  

Finally, we note that the task loss function that guides the training of 
an end-to-end system does not need to be the same as the objective function 
in the decision layer. The discrepancy between these two functions may stem 
from different reasons. The decision layer may be designed as a 
computationally tractable surrogate for the task loss function. Alternatively, 
we may choose a task loss function that directs the system's training towards 
some desirable out-of-sample reward that cannot be explicitly embedded into 
the decision layer. For example, \citet{uysal2021end} present a decision 
layer designed to diversify financial risk; however, their system's task loss 
function emphasizes financial return. In such cases, the end-to-end system is 
fundamentally similar to a reinforcement learning problem. 

\section{End-to-end portfolio construction}\label{sec:e2e}

In this section, we present our proposed DR end-to-end portfolio construction 
system and describe each individual component of the system. To allow for a 
natural progression, the structure of this section follows the `forward pass' 
of Figure \ref{fig:e2e}, starting with a discussion of the prediction layer, 
followed by the DR decision layer, and finally the task loss function. We 
conclude by presenting the complete DR end-to-end algorithm.   

\subsection{Prediction layer}\label{sec:pred}

We consider portfolio selection in discrete time. Denote the present time as 
\(t\) and let \(\bm{x}_t \in \mathbb{R}^m\) denote \(m\) financial factors 
(i.e., predictive features) observed at time \(t\). Using these factors, we 
want to predict the random return \(\tilde{\bm{y}}_t \in \mathbb{R}^n\) on
\(n\) assets over the period \([t,t+1]\).  Let \(\{\bm{x}_{j} \in \mathbb{R}^m: j =
t-T, \ldots, t-1\}\) denote the historical time series of
financial factors and let \(\{\bm{y}_{j} \in \mathbb{R}^n: j = 
t-T,\dots,t-1\}\) denote the historical time series of returns on the
\(n\) assets with \(T\) time steps.

Suppose we have access to the features \(\bm{x}_t\). A prediction model 
\(g_{\bm{\theta}}: \mathbb{R}^m \rightarrow \mathbb{R}^n\) that maps \(\bm{x}_t\) 
to the prediction \(\hat{\bm{y}}_{t}\) of the expected return 
\(\mathbb{E}[\tilde{\bm{y}}_t]\) is assumed be a differentiable 
function of the parameter \(\bm{\theta}\); otherwise the model may be as simple or 
as complex as required. An illustrative example is the linear model,  
\[
	\hat{\bm{y}}_{t} \triangleq g_{\bm{\theta}}(\bm{x}_t) = \bm{\theta}^\top \bm{x}_t,
\]
where \(\hat{\bm{y}}_{t}\in\mathbb{R}^n\) is a prediction and
\(\bm{\theta}\in\mathbb{R}^{m\times n}\) is the matrix of weights for this
specific model. Note that the dimensions of \(\bm{\theta}\) will change
depending on the structure of the prediction model.   

Let \(\bm{\tilde{\epsilon}}_t \triangleq\ \bm{\tilde{y}}_t - \hat{\bm{y}}_t = 
\bm{\tilde{y}}_t - g_{\bm{\theta}}(\bm{x}_t) \in\mathbb{R}^n\) denote the prediction
error. Note that the prediction error is a combination of stochastic noise 
(i.e., variance) and model risk. We assume that the set of past prediction 
errors \(\{\bm{\epsilon}_j = \bm{y}_j - g_{\theta}(\bm{x}_j): j = t-T, \ldots, t-1\}\) 
are \(T\) IID samples of \(\tilde{\bm{\epsilon}}_{t}\). This sample set 
of prediction errors will be used to introduce distributional robustness in 
the decision layer.

Traditionally, prediction models are trained by minimizing a
prediction loss function to improve predictive accuracy. However, an
end-to-end system is concerned with minimizing the task loss rather than
the prediction loss. In our case, the task loss corresponds to some
measure of out-of-sample portfolio performance, which we will discuss in
Section~\ref{sec:taskloss}. For now, we will focus on how to use the set of
prediction errors to introduce distributional robustness into the decision
layer.  

\subsection{Decision layer}\label{sec:decision}

A feed-forward neural network is trained by iterating over `forward' and
`backward' passes. During the `forward pass', the network is evaluated
using the current set of weights. This is followed by the `backward pass',
where the gradient of the loss function is computed and propagated
backwards through the layers of the neural network in order to update the
weights.  

In existing end-to-end systems, during the forward pass, the decision
layer is treated as a standard convex  minimization
problem. On the other hand, the backward pass requires that we
differentiate through the `\(\argmin\)' operator \citep{donti2017task}. In
general, the solutions of optimization problems cannot be written as
explicit functions of the input parameters, i.e., they generally do not 
admit a closed-form solution that we can differentiate. However, it is 
possible to differentiate through a convex optimization problem by 
implicitly differentiating its optimality conditions, provided that some 
regularity conditions are satisfied \citep{amos2017optnet, 
agrawal2019differentiable}.  

First, we adapt the existing end-to-end system to solve a portfolio selection 
problem involving risk measures. This extension requires us to
work with the set of prediction errors \(\bm{\epsilon} = \{\bm{\epsilon}_j : 
j = t-T, \ldots, t-1\}\) in addition to the prediction \(\hat{\bm{y}}_{t}\) 
corresponding to the factor vector \(\bm{x}_t\). Next, we show how to extend the 
methodology to distributionally robust portfolio selection.

A portfolio is a vector of asset weights \(\bm{z}\in\mathcal{Z}\). In
order to keep the exposition simple, the set of admissible portfolios is 
\[
	\mathcal{Z} \triangleq \big\{ \bm{z}\in\mathbb{R}^n : \bm{z}\geq\bm{0},\ \bm{1}^\top \bm{z} = 1\big\}.
\]
The equality constraint in \(\mathcal{Z}\) is the budget constraint and
it ensures that the entirety of our available budget is invested, while
the non-negativity constraint disallows the short selling of financial
assets. Our methodology extends to sets defined by limits on
industry/sector exposures and leverage constraints.   

Suppose at time \(t\) we have access to the factors \(\bm{x}_{t}\) but
not to the realized asset returns \(\bm{y}_{t}\) over the period \([t,t+1]\). 
We approximate the expected return by the output \(\hat{\bm{y}}_{t} = 
g_{\bm{\theta}}(\bm{x}_{t})\) of the prediction layer. Next, we characterize 
the variability in the portfolio return both due to the stochastic noise 
(i.e. variance) and the model risk. Given that we allow the 
prediction layer to have any general form, we avoid attempting to measure 
the parametric uncertainty associated with the prediction layer weights. Instead, 
we take a data-driven approach and estimate the combined effect of variance
and model risk directly from a sample set of past prediction errors. We quantify 
the `risk' associated with the portfolio \(\bm{z}\) by a \emph{deviation risk}
measure~\citep{rockafellar2006generalized} defined below. 

\begin{proposition}[Deviation risk measure]
\label{prop:dev_risk}
Let \(\bm{\epsilon} = \{\bm{\epsilon}_j\in\mathbb{R}^n: j = 1, \ldots, T\}\) 
denote the finite set of prediction error outcomes and let 
\(\bm{z}\in\mathcal{Z} \subseteq \mathbb{R}^n\) denote a fixed portfolio. 
Suppose \(R:\mathbb{R}\rightarrow \mathbb{R}_+\cup +\infty\) is a
closed convex function where \(R(0) = 0\) and \(R(X) = R(-X)\).  Let
\(\bm{p}\) denote any probability mass function (PMF) in the probability
simplex 
\[
  \mathcal{Q}\triangleq \big\{\bm{p} \in\mathbb{R}^T : \bm{p}\geq \bm{0},\
  \bm{1}^\top \bm{p} = 1\big\}. 
\]
Then, the deviation risk measure \(f_{\bm{\epsilon}}(\bm{z}, \bm{p})\)
associated with the set of outcomes 
\(\bm{\epsilon}\), the 
portfolio
\(\bm{z}\) and PMF \(\bm{p}\) is given by  
\begin{equation}
\label{eq:dev_risk} 
f_{\bm{\epsilon}}(\bm{z}, \bm{p}) \triangleq \min_{c} \sum_{j=1}^T
p_j\cdot R(\bm{\epsilon}_j^\top \bm{z} - c). 
\end{equation}
The deviation risk measure \(f_{\bm{\epsilon}}(\bm{z}, \bm{p})\) has the following properties.
\begin{enumerate}[itemsep=-0.15em, topsep=0pt, leftmargin=*]
\item \(f_{\bm{\epsilon}}(\cdot, \bm{p}): \mathcal{Z} \rightarrow \mathbb{R}_+\) is convex for any fixed \(\bm{p}\in\mathcal{Q}\).
  
\item \(f_{\bm{\epsilon}}(\bm{z}, \bm{p})\geq 0\) for all \(\bm{z}\in\mathcal{Z}\) and \(\bm{p}\in\mathcal{Q}\).
   
\item \(f_{\bm{\epsilon}}(\bm{z}, \bm{p})\) is shift-invariant with respect to \(\bm{\epsilon}\).
  
\item \(f_{\bm{\epsilon}}(\bm{z}, \bm{p})\) is symmetric with respect to \(\bm{\epsilon}\).
\end{enumerate}
\end{proposition}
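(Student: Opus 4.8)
The plan is to verify the four listed properties one at a time, each reducing to an elementary fact about the closed convex, nonnegative, even function $R$ with $R(0)=0$. As a preliminary observation I would first record that these hypotheses force $R$ to be nondecreasing on $[0,\infty)$: convexity together with $R(0)=0\le R(-u)=R(u)$ for $u>0$ gives a nonnegative right derivative at the origin, and convexity propagates this to monotonicity on the whole half-line. This monotonicity is the only nontrivial consequence I will need, and it is used solely to make the inner minimization over $c$ well posed.

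For Property 1 (convexity in $\bm z$), I would set $h(\bm z,c)\triangleq\sum_{j=1}^T p_j\,R(\bm\epsilon_j^\top\bm z-c)$ and note that, for fixed $\bm p\in\mathcal Q$, $h$ is \emph{jointly} convex in $(\bm z,c)$: each map $(\bm z,c)\mapsto\bm\epsilon_j^\top\bm z-c$ is affine, $R$ is convex, and $p_j\ge 0$, so $h$ is a nonnegative combination of convex functions composed with affine maps. Since $f_{\bm\epsilon}(\bm z,\bm p)=\inf_c h(\bm z,c)$ is the partial infimum of a jointly convex function over one block of variables, it is convex in $\bm z$. No attainment or finiteness is needed here; this is a standard partial-minimization argument.

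For Property 2 (nonnegativity), $R$ takes values in $\mathbb R_+\cup\{+\infty\}$ and $p_j\ge 0$, so every term $p_j R(\bm\epsilon_j^\top\bm z-c)$ is nonnegative; hence $h(\bm z,c)\ge 0$ for all $c$ and therefore $f_{\bm\epsilon}(\bm z,\bm p)=\inf_c h(\bm z,c)\ge 0$. To justify writing ``$\min_c$'' and to obtain a finite value, I would restrict $c$ to the compact interval $[\underline a,\overline a]$ with $\underline a=\min_{j:\,p_j>0}\bm\epsilon_j^\top\bm z$ and $\overline a=\max_{j:\,p_j>0}\bm\epsilon_j^\top\bm z$: for $c<\underline a$ all arguments $\bm\epsilon_j^\top\bm z-c$ are positive and shrink toward a nonnegative limit as $c\uparrow\underline a$, so by monotonicity of $R$ on $[0,\infty)$ the objective is nonincreasing there; the case $c>\overline a$ is symmetric, using $R(X)=R(-X)$ to rewrite $R(\bm\epsilon_j^\top\bm z-c)=R(c-\bm\epsilon_j^\top\bm z)$. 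Lower semicontinuity of $h(\bm z,\cdot)$ (a finite sum of closed convex functions) on the compact interval then yields a minimizer, with value in $[0,\infty)$ in the relevant finite-valued case (e.g. the leading example $R(X)=X^2$).

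For Properties 3 and 4 I would argue by a change of variable in the inner minimization. Shift-invariance: replacing each $\bm\epsilon_j$ by $\bm\epsilon_j+\bm a$ for a common $\bm a\in\mathbb R^n$ replaces $\bm\epsilon_j^\top\bm z$ by $\bm\epsilon_j^\top\bm z+\bm a^\top\bm z$, adding the \emph{same} constant $\bm a^\top\bm z$ to every scenario loss; substituting $c\mapsto c+\bm a^\top\bm z$ in $\inf_c$ recovers the original expression, so $f_{\bm\epsilon+\bm a}(\bm z,\bm p)=f_{\bm\epsilon}(\bm z,\bm p)$. Symmetry: replacing $\bm\epsilon_j$ by $-\bm\epsilon_j$ gives $\inf_c\sum_j p_j R(-\bm\epsilon_j^\top\bm z-c)$; the evenness $R(X)=R(-X)$ turns the argument into $\bm\epsilon_j^\top\bm z+c$, and the substitution $c\mapsto -c$ returns $f_{\bm\epsilon}(\bm z,\bm p)$. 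I do not expect a real obstacle anywhere; the only point requiring a little care is making the inner ``$\min$'' rigorous (attainment and finiteness of $\inf_c h(\bm z,c)$), which is exactly what the compactness-plus-monotonicity argument in the nonnegativity step handles.
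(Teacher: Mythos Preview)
Your proof is correct and follows essentially the same route as the paper: joint convexity of $h(\bm z,c)$ plus partial minimization for Property~1, nonnegativity of $R$ and $p_j$ for Property~2, and the same change-of-variable arguments ($c'=c-\bm a^\top\bm z$ and $c'=-c$) for Properties~3 and~4. Your additional care in justifying that the inner $\min_c$ is actually attained (via monotonicity of $R$ on $[0,\infty)$ and restriction to a compact interval) goes slightly beyond what the paper spells out, but the underlying strategy is identical.
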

\begin{proof}
  See Appendix \ref{app:dev_risk}.
\end{proof}

Since our `deviation risk measure' pertains to the prediction error rather 
than financial risk, we use a broader definition of the risk measure as 
compared to traditional financial risk measures
\citep[e.g., see][]{rockafellar2006generalized}.

The centering parameter \(c\) plays an important role. It is crucial for
the shift invariant property (3), and this property in turn implies that
the deviation risk associated with the outcomes \(\bm{\epsilon}\) is the
same as that associated with the mean adjusted outcomes \(\{\bm{\epsilon}_j
- \sum_{k=1}^T p_k\cdot \bm{\epsilon}_k: j = 1, \ldots, T\}\). Thus, the
deviation risk measure is really a function of the deviations around the
mean. When \(R(X) = X^2\), the associated deviation risk measure
\[
    f_{\bm{\epsilon}}(\bm{z}, \bm{p}) = \sum_{j=1}^T p_j\cdot
    \bigg(\bm{\epsilon}_j^\top \bm{z} - \sum_{k=1}^T p_{k}\cdot
    \bm{\epsilon}_{k}^\top \bm{z}\bigg)^2
\]
is the variance, where the optimal 
\(c^* = \sum_{k=1}^T p_{k}\cdot\bm{\epsilon}_{k}^\top \bm{z}\) (see 
Appendix~\ref{app:variance} for details). Later, we introduce the worst-case 
deviation risk by taking the maximum over \(\bm{p} \in \mathcal{P}({\delta})\). 
In that setting, the centering parameter \(c\) ensures that the adversary 
cannot increase risk by putting all the weight on the 
worst~\(\bm{\epsilon}_j\).

\subsubsection{Nominal layer}

We start with the assumption that every outcome in
the set \(\bm{\epsilon} = \{\bm{\epsilon}_j : j = t-T, \ldots, t-1\}\) 
has equal probability. We refer to this as the nominal decision problem.
Let \(\bm{q}\in\mathcal{Q}\) denote a uniform probability distribution
(i.e., \(q_j = 1/T\ \forall j\)). Then, the nominal decision layer
computes 
\begin{equation}
\label{eq:nom_opt}
	\bm{z}_{t}^* = \argmin_{\bm{z}\in\mathcal{Z}}\
        f_{\bm{\epsilon}}(\bm{z}, \bm{q}) - \gamma\cdot
        \hat{\bm{y}}_{t}^\top \bm{z} 
\end{equation}
where \(\gamma \in\mathbb{R}_+\) is the risk appetite parameter and
\(f_{\bm{\epsilon}}\) is a deviation risk measure as defined by
Proposition \ref{prop:dev_risk}.   

A challenge often faced by practitioners is determining an appropriate
value for \(\gamma\). The parameter is typically calibrated by
trial-and-error using in-sample performance, potentially biasing the
performance quite heavily. Instead, we treat \(\gamma\) as a learnable 
parameter of the end-to-end system. As shown by \cite{amos2017optnet} 
and \citet{agrawal2019differentiable}, we can find the partial derivative 
of the task loss function with respect to \(\gamma\) and subsequently 
use it to update \(\gamma\) through gradient descent like any other 
parameter in the network. This is advantageous because we are able to learn
a task-optimal value of \(\gamma\) and relieve the user from
having to compute an appropriate value of \(\gamma\).

\subsubsection{DR layer}\label{sec:dr_layer}

The nominal problem puts equal weight on every sample in 
the set \(\bm{\epsilon}\), i.e. the PMF \(\bm{p}\) defining the risk measure 
is the uniform distribution \(\bm{q}\). We introduce distributional
robustness by allowing \(\bm{p}\) to deviate from \(\bm{q}\) to within a
certain ``distance'' \citep{calafiore2007ambiguous, ben2013robust, kannan2020residuals,
costa2021datadriven}. Let \(I_\phi \triangleq \sum_{j=1}^T q_j\cdot
\phi(p_j/q_j):  \mathcal{Q} \times \mathcal{Q}
\rightarrow \mathbb{R}_+\) denote a statistical distance  function
based on a \(\phi\)-divergence (e.g., Kullback--Leibler, Hellinger, chi-squared).
Then, the ambiguity set \(\mathcal{P}(\delta)\) for
the distribution \(\bm{p}\) is given by 
\[
  \mathcal{P}(\delta) \triangleq \big\{\bm{p} \in\mathbb{R}^T :
  \bm{p}\geq \bm{0},\ \bm{1}^\top \bm{p} = 1,\ I_\phi(\bm{p},\bm{q})
  \leq \delta\big\}.
\]
The size parameter \(\delta\) defines the maximum permissible distance
between \(\bm{p}\) and the nominal distribution \(\bm{q}\). The DR
decision layer chooses the portfolio \(\bm{z}^{*}_{t}\) assuming the worst
case behavior of \(\bm{p} \in \mathcal{P}({\delta})\), i.e. it solves the
following minimax problem: 
\begin{equation}
\label{eq:dr_minimax}
	\bm{z}_{t}^* = \argmin_{\bm{z}\in\mathcal{Z}}\ \max_{\bm{p}\in\mathcal{P(\delta)}} f_{\bm{\epsilon}}(\bm{z}, \bm{p}) - \gamma\cdot \hat{\bm{y}}_{t}^\top \bm{z}.
\end{equation}
In general, solving minimax problems can be computationally expensive and
can also lead to local optimal solutions rather than global solutions.
It is straightforward to check that the objective in \eqref{eq:dr_minimax}
is convex in \(\bm{z}\); however, the concavity property of the deviation 
risk measure \(f_{\bm{\epsilon}}(\bm{z}, \bm{p}) \) as a function of \(\bm{p}\) 
is not immediately obvious. Nevertheless, we show in Appendix~\ref{app:dual} 
that we can still use convex duality to reformulate the minimax 
problem~\eqref{eq:dr_minimax} into the following
minimization problem, 
\begin{equation}
\label{eq:dr_opt}
 \bm{z}_{t}^* = \argmin_{\bm{z}\in\mathcal{Z},\ \lambda\geq 0,\
   \xi,\ c} f_{\bm{\epsilon}}^{\delta}(\bm{z}, c, \lambda, \xi) -
    \gamma \cdot \hat{\bm{y}}_{t}^\top \bm{z},
\end{equation}
where
\begin{align}
	&f_{\bm{\epsilon}}^{\delta}(\bm{z}, c, \lambda, \xi)\nonumber \\
  	&\quad\qquad \triangleq \xi +
  	\delta\cdot \lambda + \frac{\lambda}{T} \sum_{j=1}^T \phi^*\bigg(
  	\frac{R(\bm{\epsilon}_j^\top \bm{z} - c) -
  	\xi}{\lambda}\bigg),\label{eq:dr_obj}
\end{align}
\(\phi^*(\cdot)\) is the convex conjugate of the \(\phi\)-divergence
defining the ambiguity set \(\mathcal{P}({\delta})\),\footnote{For a
description of \(\phi\)-divergence functions and their convex
conjugates, please refer to Tables 2 and 4 in \citet{ben2013robust}.}
and \(\lambda\geq 0, \xi\in\mathbb{R}\) are auxiliary variables arising
from constructing the Lagrangian dual. Note that we have abused our 
notation of the `\(\argmin\)' operator in \eqref{eq:dr_opt} since 
\(\bm{z}_{t}^*\) is the only pertinent output.

Tractable reformulations of the DR layer exist for many choices of 
\(\phi\)-divergence. In Appendix \ref{app:reform}, we show that 
the DR layer can be formulated as 
a second-order cone problem when \(\phi\) is the Hellinger distance\footnote{The DR layer reduces to a 
second-order cone program if the function \(R(X)\) in 
Proposition~\ref{prop:dev_risk} is quadratic or piecewise linear. 
Otherwise, the complexity of the problem is dictated by the choice of 
\(R(X)\).} and as a linear optimization problem when \(\phi\) is the 
Variational distance\footnote{The DR layer 
reduces to a linear program if the function \(R(X)\) in 
Proposition~\ref{prop:dev_risk} is piecewise linear. Otherwise, the 
complexity of the problem is dictated by the choice of \(R(X)\).}.  
\citet{ben2013robust} provide tractable reformulations for other
choices of \(\phi\)-divergence.

Recasting the minimax problem into a convex minimization problem 
allows us to differentiate through the DR layer during training of the 
end-to-end system \citep{amos2017optnet, amos2019differentiable}. Another 
benefit of dualizing the inner maximization problem is that the ambiguity 
sizing parameter \(\delta\) becomes an explicit part of the DR layer's 
objective function, and can, therefore,  be treated as a learnable parameter 
of the end-to-end system. Determining the size of an ambiguity set a priori
is often a subjective exercise, with many users resorting to a probabilistic 
confidence level. By treating \(\delta\) as a learnable parameter, we relieve 
the user from the responsibility of having to assign a value of \(\delta\) a 
priori.

\subsection{Task loss}\label{sec:taskloss}

In standard supervised learning models, the loss function is a 
measure of predictive accuracy. For example, a popular prediction loss 
function is the mean squared error (MSE). For a prediction at time \(t\), the 
loss is  
\begin{equation}
\label{eq:loss_mse}
	l_{\text{mse}}(\hat{\bm{y}}_t, \bm{y}_t) \triangleq 
	\frac{1}{n}\|\bm{y}_t - \hat{\bm{y}}_t \|_2^2.
\end{equation}
In an end-to-end system, predictive loss measures the performance of only 
the the prediction layer. However, using the predictive loss to measure 
the performance of the entire system fails to consider our main objective: 
the out-of-sample performance of the \emph{decision}.

Therefore, in contrast to standard supervised learning models, end-to-end systems 
measure their performance using a `task loss' function, which is chosen in 
order to train the system based on the out-of-sample performance of the 
optimal decision. For example, the task loss in \citet{butler2021integrating} 
has the same form as the objective function of the decision layer, except 
the predictions are replaced with the corresponding realizations in order 
to calculate the out-of-sample performance. We allow for the possibility 
that the task-loss is different from the objective function of the decision 
layer. This allows us to approximate a task loss function that is hard to 
optimize with a more tractable surrogate. In such cases, the end-to-end 
system is fundamentally similar to a reinforcement learning problem.  

We define the task loss as the financial performance of our optimal
portfolio \(\bm{z}_{t}^*\) measured over some out-of-sample period of 
length \(v+1\) with the realized asset returns
\(\{\bm{y}_j : j= t, \ldots,t+v\}\). When \(v = 0\), we set the task 
loss to the realized return \(\bm{y}_t^\top \bm{z}_t^*\). When 
\(v > 0\), we can use other measures of financial performance, e.g. the
portfolio volatility or the Sharpe ratio~\citep{sharpe1994sharpe} 
calculated over the next \(v\) time steps. In our numerical experiments, 
we set the task loss to be a weighted combination of the predictive accuracy
and the Sharpe ratio\footnote{Note that we have defined
the Sharpe ratio using the portfolio returns rather than the portfolio
\emph{excess} returns (i.e., the returns in excess of the risk-free
rate).}
\begin{equation}
\label{eq:loss_perf_sharpe}
	l_{\text{SR}}\big(\bm{z}_{t}^*, \{\bm{y}_j\}_{j=t}^{t+v}\big) \triangleq -\frac{\text{mean}\big(\{\bm{y}_{j}^\top \bm{z}_{t}^*\}_{j=t}^{t+v}\big)}{\text{std}\big(\{\bm{y}_{j}^\top \bm{z}_{t}^*\}_{j=t}^{t+v}\big)},
\end{equation}
where the operator \(\text{mean}(\cdot)\) calculates the mean of a set,
while \(\text{std}(\cdot)\) calculates the standard deviation.   
 
In general, the task loss is any differentiable function of the optimal
portfolio \(\bm{z}_{t}^*\). In turn, \(\bm{z}_{t}^*\) is an implicit
function of the parameters \(\bm{\theta}\), \(\gamma\) and
\(\delta\). Therefore, during the backward training pass, we can
differentiate the task loss with respect to \(\gamma\) and \(\delta\) in
the decision layer, and with respect to \(\bm{\theta}\) in the prediction
layer.   

\subsection{Training the DR end-to-end system}\label{sec:algo}

Our proposed DR end-to-end system for portfolio construction is detailed
in Algorithm \ref{algo:DRe2e}. Recall that \(T\) denotes the number of
prediction error samples from which to estimate the portfolio's prediction
error, while \(v\) is the length of the out-of-sample performance
window. Additionally, let us define \(T_0\) as the total number of
observations in the full training data set.

\begin{algorithm}[tb]
\caption{DR end-to-end system training}
\label{algo:DRe2e}
	\KwIn{\(\{\bm{x}_j\}_{j=1}^{T_0-v}\), \(\{\bm{y}_{j}\}_{j=1}^{T_0}\)}
	\KwInit{system parameters \(\bm{\theta}\), \(\gamma\), \(\delta\); learning rate \(\eta\); number of epochs \(K\)}
	\For{\(k = 1,\dots,K\)}{
		\(L\leftarrow 0\)\;
		\For{\(t = T+1,\dots,T_0-v\)}{
			\makebox[2.5em][r]{\(\displaystyle \hat{\bm{y}}\)} \(\displaystyle\leftarrow \{\bm{g}_{\bm{\theta}}(\bm{x}_j) : j = t-T, \ldots, t \}\)\;
			\makebox[2.5em][r]{\(\displaystyle \bm{\epsilon}\)} \(\displaystyle\leftarrow \{\bm{y}_j - \hat{\bm{y}}_j : j = t-T, \ldots, t-1\}\)\;
			\makebox[2.5em][r]{\(\displaystyle \bm{z}_{t}^*\)} \(\displaystyle\leftarrow \argmin_{\bm{z}\in\mathcal{Z}, \lambda\geq 0, \xi, c} f_{\bm{\epsilon}}^{\delta}(\bm{z}, c, \lambda, \xi) - \gamma \cdot \hat{\bm{y}}_{t}^\top \bm{z}\)\;
			\makebox[2.5em][r]{\(\displaystyle L\)} \(\displaystyle\leftarrow L + \frac{1}{T_0-T-v}l_{\text{task}}\big(\bm{z}_{t}^*, \{\bm{y}_j\}_{j=t}^{t+v}\big)\)\;
		}
		Update \(\bm{\theta}, \gamma, \delta\) with \(\nabla_{\bm{\theta}} L\), \(\nabla_{\gamma} L\), \(\nabla_{\delta} L\)\;
    }
\KwOut{\(\bm{\theta}\), \(\gamma\), \(\delta\)}
\end{algorithm}

In certain settings, a user may be unable or unwilling to integrate 
the prediction layer with the rest of the system. 
Alternatively, they may be using a prediction layer that cannot be 
trained via gradient descent, e.g. a tree-based predictor. In such 
cases, we assume the prediction layer is fixed, which in turn means 
that the prediction errors are constant during training.  
 
Nevertheless, we can still pass a sample set of \(T\) prediction errors 
as an input to the DR layer during training. Since the DR layer is a 
differentiable convex optimization problem, we are still able to learn 
values of \(\gamma\) and \(\delta\) that minimize the task loss. In 
practice, setting the risk appetite parameter \(\gamma\) and model 
robustness parameter \(\delta\) is difficult, and requires significant 
effort. For example, \citet{bertsimas2018data} propose using 
cross-validation techniques to set the level of robustness. Instead, 
our end-to-end system learns these parameters directly from data as 
part of the overall training in a much more efficient manner.

\section{Numerical experiment}\label{sec:num_ex}
We present the results of five numerical experiments. Each experiment 
evaluates different characteristics of our proposed DR end-to-end system. 
The first four experiments are conducted using historical data from
the U.S. stock market. The fifth experiment uses synthetic data generated
from a controlled stochastic process.  

The first experiment provides a holistic measure of financial performance. 
The second, third and fourth experiments isolate the out-of-sample effect 
of allowing the end-to-end system to learn the parameters \(\gamma\), 
\(\delta\) and \(\bm{\theta}\), respectively. Finally, the fifth experiment 
evaluates the effect of robustness when working with complex prediction 
layers.

The numerical experiments were conducted using a code written in
Python (version 3.8.5), with PyTorch (version 1.10.0) 
\citep{paszke2019pytorch} and Cvxpylayers (version 0.1.4) 
\citep{agrawal2019differentiable} used to compute the
end-to-end systems. The neural network is trained using the `Adam'
optimizer \citep{kingma2014adam} and the `ECOS' convex optimization 
solver \citep{domahidi2013ecos}.

\subsection{Competing investment systems}\label{sec:ex_models}

The numerical experiments involve seven different investment systems. 
The individual experiments compare these systems against each other. 
Although many of the systems are designed to learn the parameters \(\bm{\theta}\), 
\(\gamma\) and \(\delta\), some experiments purposely keep these parameters 
constant in order to isolate the effect of learning the remaining parameters. 
The seven investment systems are described below.

\begin{enumerate}[itemsep=-0.15em, topsep=0pt, leftmargin=*]
  
\item \emph{Equal weight (EW)}: Simple portfolio where all assets
  have equal weight -- no prediction or optimization is required
  and no parameters need to be learned. Equal weight portfolios 
  promote diversity and have been empirically shown to have a good 
  out-of-sample Sharpe ratio~\citep{demiguel2009optimal}.  
  
\item \emph{Predict-then-optimize (PO)}: Two-stage system with a linear
  prediction layer. The decision layer is given by the nominal problem defined 
  in~\eqref{eq:nom_opt} with \(\gamma\) held constant. No parameters are 
  learned (i.e., once the parameters are initialized, they are held constant).  
  
\item \emph{Base}: End-to-end system that does \emph{not} incorporate the
  risk function \(f_{\bm{\epsilon}}(\bm{z},\bm{p})\) and chooses
  portfolios by solving the optimization problem
  \[
    \bm{z}_{t}^* = \argmax_{z\in\mathcal{Z}}\ \hat{\bm{y}}_{t}^\top \bm{z}. 
  \]
  The prediction layer is linear and the only learnable parameter is
  \(\bm{\theta}\). Note that the base system is equivalent to a system where 
  the variability of the outcome is not impacted by the decision \(\bm{z}_t\) 
  -- as was the case in~\citet{donti2017task}.
  
\item \emph{Nominal}: End-to-end system with a linear prediction layer and
  a decision layer corresponds to the nominal problem~\eqref{eq:nom_opt}. The
  learnable parameters are \(\bm{\theta}\) and \(\gamma\).
  
\item \emph{DR}: Proposed end-to-end system with a linear prediction layer 
  and a decision layer corresponds to the DR problem~\eqref{eq:dr_opt}. We 
  choose the Hellinger distance as the \(\phi\)-divergence to define the 
  ambiguity set \(\mathcal{P}(\delta)\). The learnable parameters are 
  \(\bm{\theta}\), \(\gamma\) and \(\delta\).  

\item \emph{NN-nominal}: End-to-end system with a non-linear prediction 
  layer. The prediction layer is composed of a neural network with either 
  two or three hidden layers. The decision layer corresponds to the nominal
  problem~\eqref{eq:nom_opt}. The learnable parameters are \(\bm{\theta}\) 
  and \(\gamma\).

\item \emph{NN-DR}: End-to-end system with a non-linear prediction layer. 
  The prediction layer is composed of a neural network with either two or 
  three hidden layers. The decision layer corresponds to the DR 
  problem~\eqref{eq:dr_opt}. The learnable parameters are \(\bm{\theta}\), 
  \(\gamma\) and \(\delta\).
  
\end{enumerate}

Additionally, since retaining some degree of predictive accuracy in the 
prediction layer is often desirable \citep{donti2017task}, we define the task 
loss as a linear combination of the Sharpe ratio loss 
in~\eqref{eq:loss_perf_sharpe} and the MSE loss in~\eqref{eq:loss_mse},
\begin{align}
    &l_{\text{task}}\big(\bm{z}_{t}^*, \hat{\bm{y}}_t, 
    \{\bm{y}_j\}_{j=t}^{t+v}\big)\nonumber\\
    &\qquad\quad = 0.5\cdot l_{\text{mse}}(\hat{\bm{y}}_t, \bm{y}_{t}) 
    + l_{\text{SR}}\big(\bm{z}_{t}^*, \{\bm{y}_j\}_{j=t}^{t+v}\big).\nonumber
\end{align}
The look-ahead evaluation period of the task loss from \(t\) to \(t+v\)
consists of one financial quarter (i.e., 13 weeks), which means
\(v=12\). The weight of \(0.5\) on the MSE loss was chosen to ensure
a reasonable trade-off between out-of-sample performance and prediction 
error. We expect similar performance with other weights.

\subsection{Experiments with historical data}\label{sec:exp_hist}

The historical data consisted of weekly asset and feature returns from 
07--Jan--2000 to 01--Oct--2021. The predictive feature data were sourced 
from Kenneth French's database~\citep{French2021Data} and consist of the 
weekly returns of eight features. The asset data were sourced from AlphaVantage
(www.alphavantage.co) and consist of the weekly returns of 20 U.S. stocks 
belonging to the S\&P~500 index. The selected features and assets are listed in 
Table~\ref{table:assets}. To avoid prediction biases, the input--output pairs 
are lagged by a period of one week, e.g. asset returns \(\bm{y}\)
for 14--Jan--2000 are predicted using the feature vector \(\bm{x}\) observed
on 07--Jan-2000.

\begin{table}[!ht]
\caption{List of features and assets}
\centering
\begin{tabular}{lllll}
\toprule
	\multicolumn{5}{c}{Features} \\
\midrule
	Market & \multicolumn{2}{l}{\quad Profitability} & \multicolumn{2}{l}{Investment}\\[0.5ex]
	Size & \multicolumn{2}{l}{\quad ST reversal} & \multicolumn{2}{l}{LT reversal}\\[0.5ex]
	Value & \multicolumn{2}{l}{\quad Momentum}\\[0.5ex]
\toprule
	\multicolumn{5}{c}{Assets} \\
\midrule
	AAPL & MSFT & AMZN & C   & JPM \\[0.5ex]
	BAC  & XOM  & HAL  & MCD & WMT\\[0.5ex]
	COST & CAT  & LMT  & JNJ & PFE\\[0.5ex]
	DIS  & VZ   & T    & ED  & NEM\\[0.5ex]
\bottomrule
\end{tabular}
\label{table:assets}
\end{table}

For each experiment, the setup of the participating investment systems is 
outlined in a corresponding table (see Table~\ref{table:exp1_models} as an
example). These tables indicate the initial values for the parameters 
\(\bm{\theta}\), \(\gamma\) and \(\delta\), and whether these parameters
were learned during training or they remained constant during the 
experiment. Some experiments were designed to isolate the effect of 
learning a specific parameter; in these experiments, other parameters were
kept constant even when the investment system could potentially learn 
them.

The training of the end-to-end learning systems was carried out
as follows. We used the portfolio error variance as the deviation risk measure 
\(f_{\bm{\epsilon}}\) for all investment systems (we show in 
Appendix~\ref{app:variance} that the variance can be cast as a deviation 
risk measure \(f_{\bm{\epsilon}}\)). For consistency, all linear prediction 
layers are initialized to the ordinary least squares (OLS) weights. Additionally,
the initial values of the  risk appetite parameter \(\gamma\) and 
robustness parameter \(\delta\) were sampled uniformly from appropriately
defined intervals (see Appendix~\ref{app:init} for the initialization 
methodology).

We used two years of weekly observations as a representative sample of 
prediction errors (\(T=104\)). The data is separated by a \(60:40\) 
ratio into training and testing sets, respectively. The training set 
ranges from 07--Jan--2000 to 18--Jan--2013 and serves to train and 
validate the end-to-end systems.  

Cross-validation was used to tune the learning rate \(\eta\) and the number 
of training epochs \(K\). Since we are working 
with time series data, the hyperparameters were selected through a time 
series split cross-validation process. To do this, we trained and validated 
the end-to-end systems over four separate folds. For each fold, the 
original training set was divided into a training subset and a validation 
subset. We began by using the first 20\% of the training set as the 
training subset, and the subsequent 20\% as the validation set. This is 
increased to a ratio of 40:20, 60:20 and, finally, 80:20. Moreover, we 
tested all possible combinations between three possible learning rates, 
\(\eta\in\{0.005, 0.0125, 0.02\}\), and number of epochs 
\(K\in\{30, 40, 50, 60, 80, 100\}\).  
 
Once all four folds were completed, the average validation loss was 
calculated and used to select the optimal hyperparameters that yield 
the lowest validation loss for each end-to-end system. Once the optimal
hyperparameters were selected, they were kept constant during the 
out-of-sample test. The average validation losses of the end-to-end 
systems from Experiments 1--4 are presented in Table~\ref{table:validation} 
in Appendix~\ref{app:add_exp}. The table also highlights the optimal 
hyperparameters selected for each system. 

The out-of-sample test ranges from 25--Jan--2013 to 
01--Oct--2021, with the first prediction taking place on 25--Jan--2013. 
Immediately before the test starts, the portfolios were retrained using 
the full training data set. Note that the hyperparameters remained constant 
with the values selected during the cross-validation stage. 

Following the best practice for time series models, we retrained the
investment systems approximately every two years using all past data 
available at the time of training. Therefore, the investment systems 
were trained a total of four times. Before retraining takes place, 
the prediction layer weights are reset to the OLS weights computed from
the corresponding training data set. In addition, the parameters \(\gamma\) 
and \(\delta\) are reset to their initial values before retraining takes place.

For each experiment, the out-of-sample results are presented in the form of 
a wealth evolution plot, as well as a table that summarizes the financial 
performance of the competing portfolios.

\subsubsection{Experiment 1: General evaluation}\label{sec:gen_eval}

The first experiment is a
complete financial ``backtest'' to evaluate 
the performance of the DR end-to-end learning system as an asset management 
tool. To do so, the DR system is compared against four other competing
systems presented in Table~\ref{table:exp1_models}.
In this set of experiments, the systems
were able to learn all available parameters within their purview.

\begin{table}[t]
\caption{Experiment 1 -- List of models}
\centering
\begin{tabular}{lrcr@{}rcr@{}rc}
\toprule
		& \multicolumn{2}{c}{\(\bm{\theta}\)} & & \multicolumn{2}{c}{\(\gamma\)} & & \multicolumn{2}{c}{\(\delta\)}\\[0.5ex] \cline{2-3} \cline{5-6} \cline{8-9}
\rule{0pt}{3ex}System 	& Val. 	& Lrn    && Val. & Lrn 	  && Val. & Lrn \\
\midrule
EW 		& - 	& -      && -     & -      && -     & -      \\[0.5ex]
PO 		& OLS 	& -      && 0.046 & -      && -     & -      \\[0.5ex]
Base	& OLS 	& \cmark && -     & -      && -     & -      \\[0.5ex]
Nominal & OLS 	& \cmark && 0.046 & \cmark && -     & -      \\[0.5ex]
DR		& OLS 	& \cmark && 0.046 & \cmark && 0.312 & \cmark \\[0.25ex]
\bottomrule
\multicolumn{9}{p{0.9\linewidth}}{\small\rule{0pt}{3ex}\textbf{Note(s)}: Val, Initial value; Lrn, Learnable.}\\
\end{tabular}
\label{table:exp1_models}
\end{table}

The out-of-sample financial performance of the five competing investment 
systems is compared as follows:
Figure~\ref{fig:exp1_wealth} shows the 
wealth evolution of the five corresponding portfolios, Figure~\ref{fig:exp1_sr} 
shows the cumulative Sharpe ratio, and 
Table~\ref{table:exp1_results} presents 
a summary of the results over the complete investment horizon. The experimental 
results lead to the following observations. 

\begin{itemize}[itemsep=-0.15em, topsep=0pt, leftmargin=*]
  
\item \emph{Benefit of using a sample-based model risk measure}: Unlike the base 
  model, the nominal and DR systems integrate a sample-based prediction error 
  into the decision layer. The results in Table~\ref{table:exp1_results} clearly 
  show the significance of incorporating prediction errors into the decision 
  layer -- both the nominal and DR systems have a higher return and lower 
  volatility than the base system.
  
\item \emph{Impact of end-to-end learning}: When compared against the 
  straightforward predict-then-optimize and equal weight systems, the nominal 
  and DR systems have higher Sharpe ratios on average over the entire
  investment horizon, highlighting the advantage of end-to-end systems of 
  being able to learn the prediction and decision parameters. We note that the nominal and 
  DR portfolios have a higher volatility due to their pronounced growth in 
  wealth, but nevertheless maintain a high cumulative Sharpe ratio as 
  shown in Figure~\ref{fig:exp1_sr}.
  
\item \emph{Distributional robustness}: Comparing the nominal and DR 
  systems, we clearly see the benefit of incorporating robustness into our 
  portfolio selection system. The DR system has a higher Sharpe ratio, which is 
  a reflection of our choice of a task loss function. 
  
\end{itemize}

\begin{figure}[ht]
\begin{center}
\centerline{\includegraphics[width=\columnwidth]{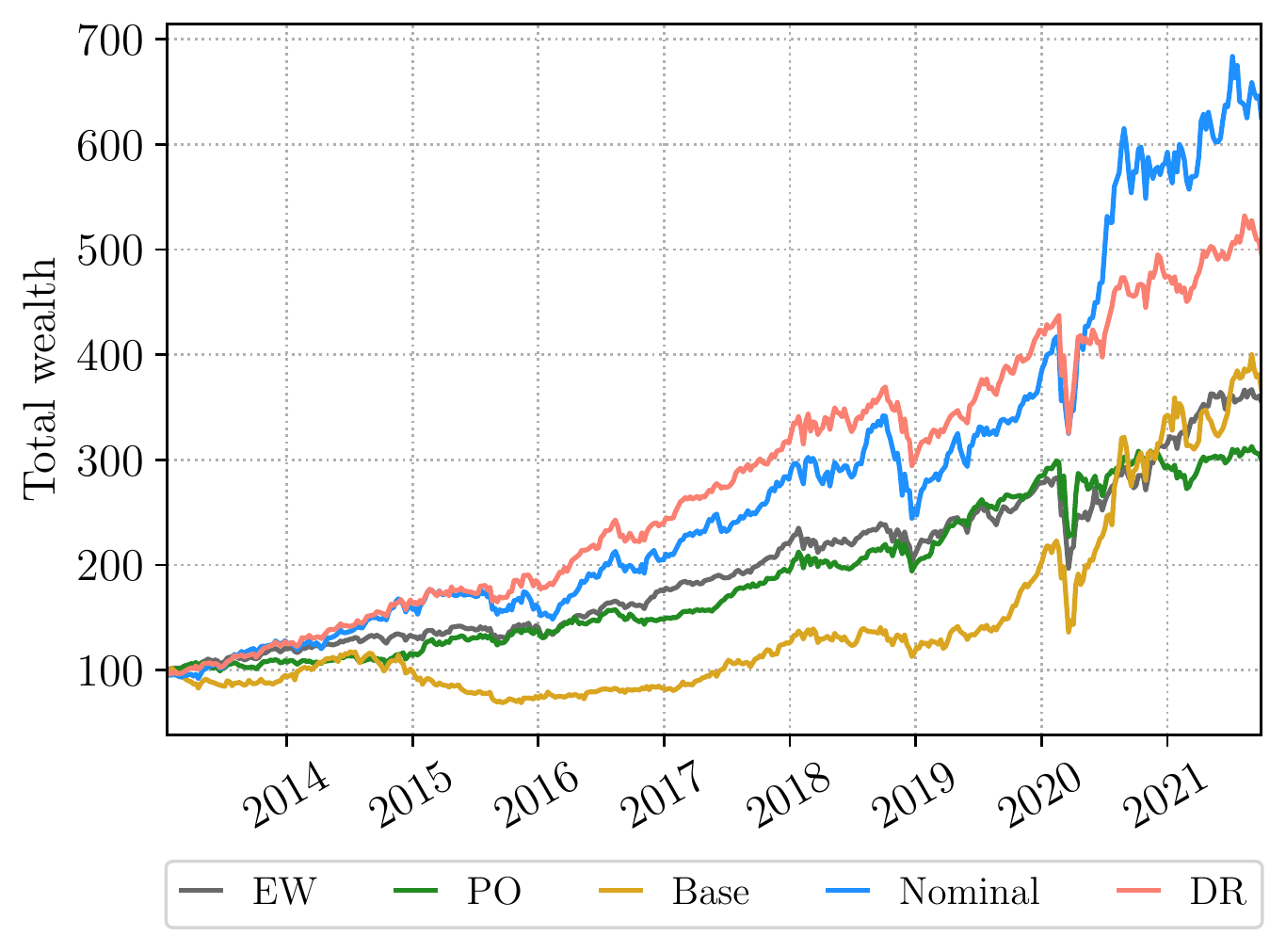}}
\caption{Experiment 1 -- Wealth evolution}
\label{fig:exp1_wealth}
\end{center}
\vskip -0.2in
\end{figure}

\begin{figure}[ht]
\begin{center}
\centerline{\includegraphics[width=\columnwidth]{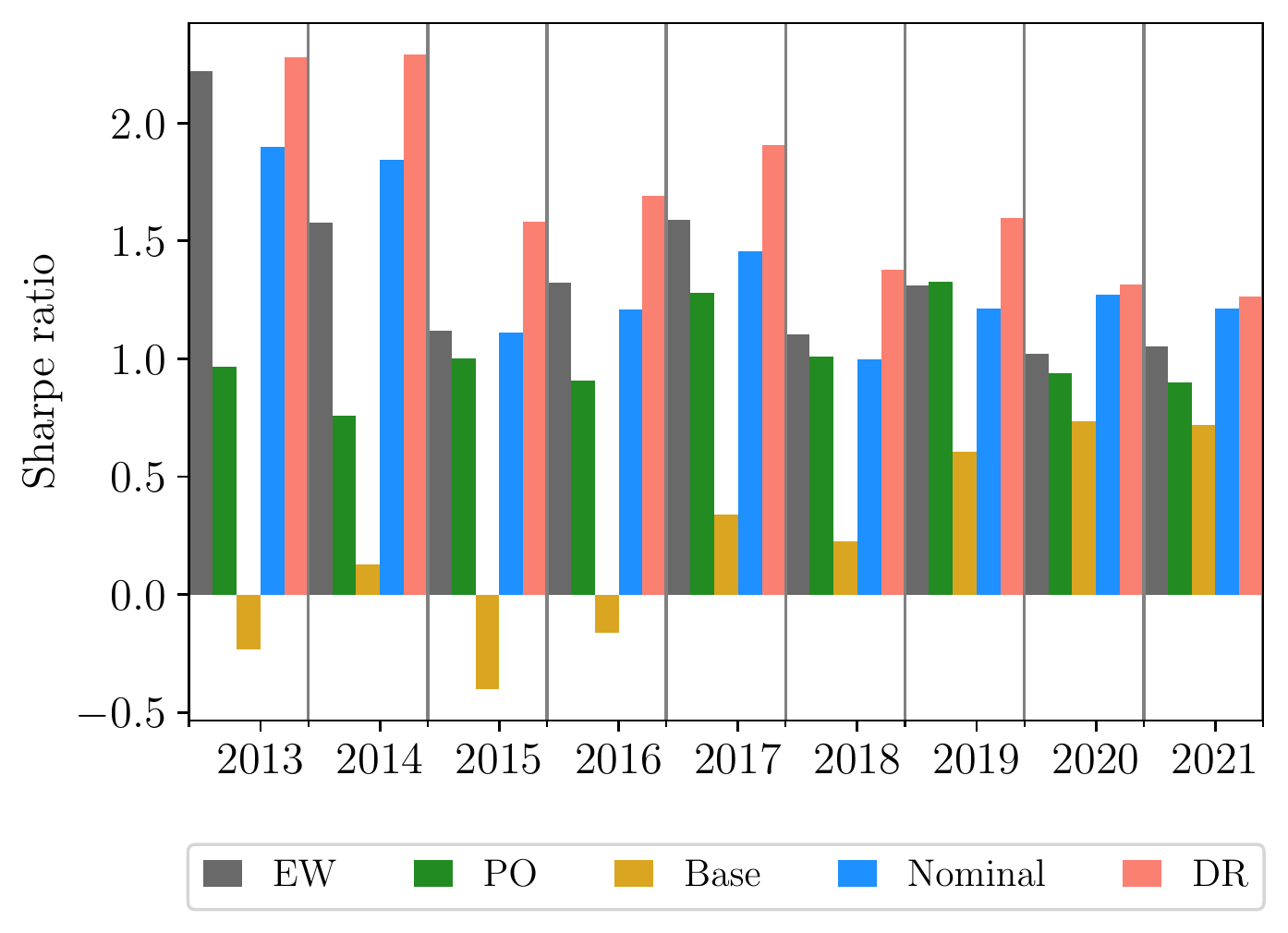}}
\caption{Experiment 1 -- Cumulative Sharpe ratio. The cumulative Sharpe ratio is computed at year-end. Values are annualized.}
\label{fig:exp1_sr}
\end{center}
\vskip -0.2in
\end{figure}

\begin{table}[t]
\caption{Experiment 1 -- Results}
\centering
\begin{tabular}{lrrrrr}
\toprule
				& EW & PO & Base & Nom. & DR\\
\midrule
Return (\%)     & 15.6 & 13.4 & 16.1 & 23.4 & 20.1 \\
Volatility (\%) & 14.9 & 15.3 & 25.0 & 18.8 & 15.5 \\
Sharpe ratio    & 1.05 & 0.88 & 0.64 & 1.24 & 1.30 \\
\bottomrule
\multicolumn{5}{p{0.6\linewidth}}{\small\rule{0pt}{3ex}\textbf{Note}: Values are annualized.}\\
\end{tabular}
\label{table:exp1_results}
\end{table}

\subsubsection{Experiment 2: Learning \texorpdfstring{\(\delta\)}{Lg}}\label{sec:learn_delta}

The second experiment focused on understanding the improvement in 
performance by learning \(\delta\); therefore, \(\bm{\theta}\) and 
\(\gamma\) were fixed. We focused on three investment systems: the 
predict-then-optimize~(PO) system (since \(\bm{\theta}\) and \(\gamma\) 
are fixed, this is the same as the nominal system), the DR system with 
constant \(\bm{\theta}\), \(\gamma\) and \(\delta\); and the  DR
system  with constant \(\bm{\theta}\) and \(\gamma\), but with a learnable
\(\delta\). The difference in performance between the PO system and the DR
system with all parameters fixed will highlight the benefit of adding some
(not optimized) robustness, and the difference in performance between the DR
systems with optimized \(\delta\) and fixed \(\delta\) is a measure of the
impact of size of the uncertainty set. Details of the three systems are 
presented in Table~\ref{table:exp2_models}. The out-of-sample
financial performance of the three investment systems is summarized
in Figure~\ref{fig:exp2_wealth} and Table~\ref{table:exp2_results}. 
The experimental results lead to the following observations.

\begin{itemize}[itemsep=-0.15em, topsep=0pt, leftmargin=*]

\item \emph{Impact of robustness without learning}: The results for the 
  PO system and the DR with no parameter learning show that, even when 
  learning is not allowed, robustness has a positive impact on out-of-sample 
  performance.   
  
\item \emph{Isolated learning of \(\delta\)}: The results in 
  Table~\ref{table:exp2_results} show that, although adding robustness 
  improves the Sharpe ratio, solely optimizing \(\delta\) can be detrimental 
  to the system's out-of-sample performance.
  
\end{itemize}

\begin{table}[t]
\caption{Experiment 2 -- List of models}
\centering
\begin{tabular}{lrcr@{}rcr@{}rc}
\toprule
		& \multicolumn{2}{c}{\(\bm{\theta}\)} & & \multicolumn{2}{c}{\(\gamma\)} & & \multicolumn{2}{c}{\(\delta\)}\\[0.5ex] \cline{2-3} \cline{5-6} \cline{8-9}
\rule{0pt}{3ex}System 	& Val. 	& Lrn    && Val. & Lrn 	  && Val. & Lrn \\
\midrule
PO & OLS & -      && 0.046 & -      && -     & -      \\[0.5ex]
DR & OLS & \xmark && 0.046 & \xmark && 0.312 & \xmark \\[0.5ex]
DR & OLS & \xmark && 0.046 & \xmark && 0.312 & \cmark \\[0.25ex]
\bottomrule
\multicolumn{9}{p{0.9\linewidth}}{\small\rule{0pt}{3ex}\textbf{Note}: Val, Initial value; Lrn, Learnable.}\\
\end{tabular}
\label{table:exp2_models}
\end{table}

\begin{figure}[ht]
\begin{center}
\centerline{\includegraphics[width=\columnwidth]{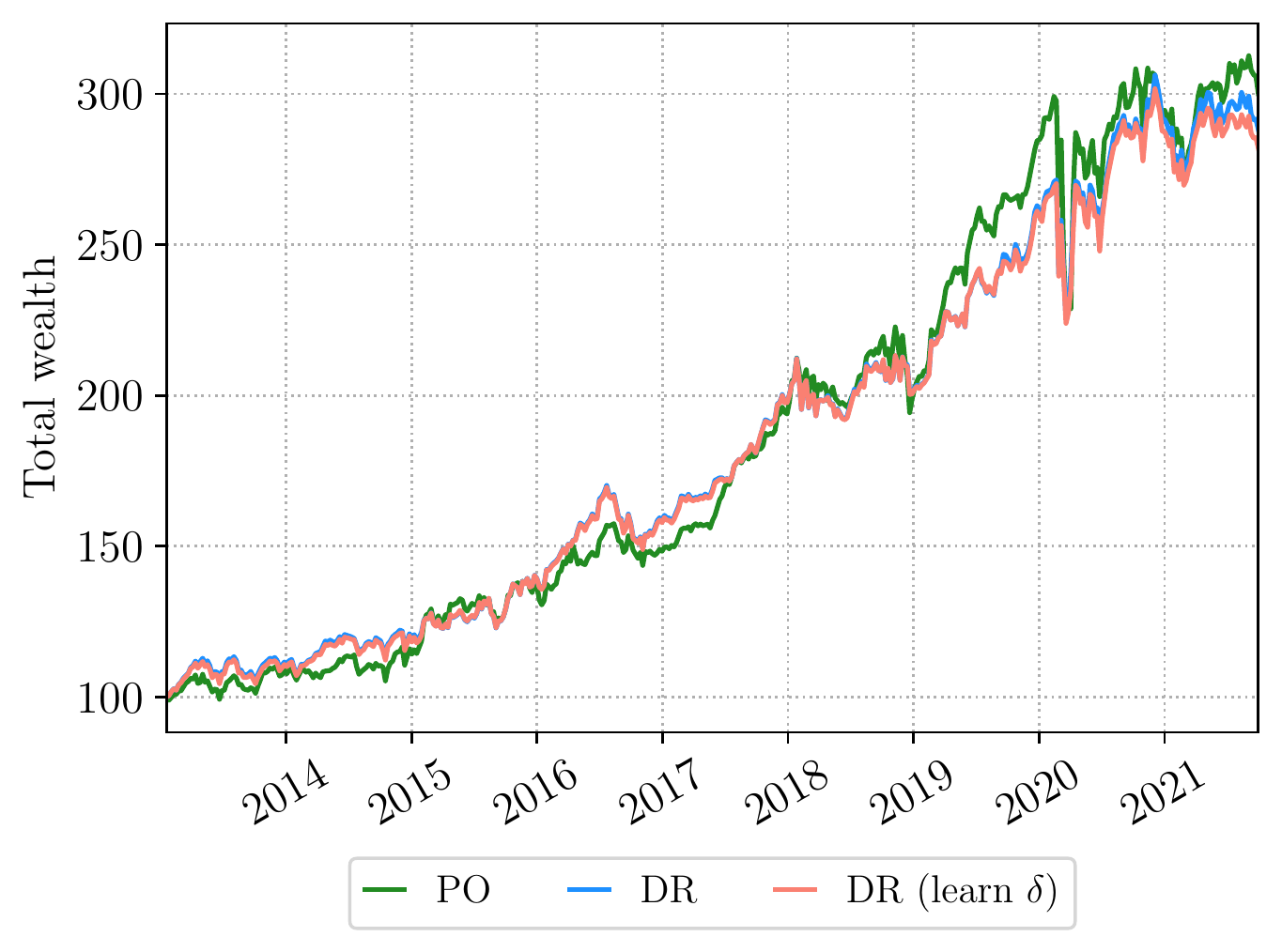}}
\caption{Experiment 2 -- Wealth evolution}
\label{fig:exp2_wealth}
\end{center}
\vskip -0.2in
\end{figure}

\begin{table}[t]
\caption{Experiment 2 -- Results}
\centering
\begin{tabular}{lrrr}
\toprule
				& PO & \multicolumn{2}{c}{DR}\\[0.5ex] \cline{3-4}
\rule{0pt}{3ex}&	 & Const. \(\delta\) & Learn \(\delta\)\\
\midrule
Return (\%)     & 13.4 & 12.9 & 12.6 \\
Volatility (\%) & 15.3 & 12.8 & 12.8 \\
Sharpe ratio    & 0.88 & 1.01 & 0.99 \\
\bottomrule
\multicolumn{4}{p{0.6\linewidth}}{\small\rule{0pt}{3ex}\textbf{Note}: Values are annualized.}\\
\end{tabular}
\label{table:exp2_results}
\end{table}

\subsubsection{Experiment 3: Learning \texorpdfstring{\(\gamma\)}{Lg}}\label{sec:learn_gamma}
The third experiment assessed the out-of-sample impact of learning 
\(\gamma\). Therefore, here we experimented with only those systems
involving \(\gamma\), i.e.  nominal system with constant \(\bm{\theta}\), 
and a DR system with constant \(\bm{\theta}\) and \(\delta\) but with a learnable
\(\gamma\), with the PO system added as a baseline control. Details of the three
systems are given in Table~\ref{table:exp3_models}.

\begin{table}[t]
\caption{Experiment 3 -- List of models}
\centering
\begin{tabular}{lrcr@{}rcr@{}rc}
\toprule
		& \multicolumn{2}{c}{\(\bm{\theta}\)} & & \multicolumn{2}{c}{\(\gamma\)} & & \multicolumn{2}{c}{\(\delta\)}\\[0.5ex] \cline{2-3} \cline{5-6} \cline{8-9}
\rule{0pt}{3ex}System 	& Val. 	& Lrn    && Val. & Lrn 	  && Val. & Lrn \\ 
\midrule
PO 		& OLS 	& -      &&	0.046 & -      && -     & -      \\[0.5ex]
Nominal & OLS 	& \xmark &&	0.046 & \cmark && -     & -      \\[0.5ex]
DR		& OLS 	& \xmark &&	0.046 & \cmark && 0.312 & \xmark \\[0.25ex]
\bottomrule
\multicolumn{9}{p{0.9\linewidth}}{\small\rule{0pt}{3ex}\textbf{Note(s)}: Val, Initial value; Lrn, Learnable.}\\
\end{tabular}
\label{table:exp3_models}
\end{table}

The out-of-sample financial performance of the three investment systems is 
summarized in Figure~\ref{fig:exp3_wealth} and Table~\ref{table:exp3_results}. 
The experimental results lead to the following observations.

\begin{itemize}[itemsep=-0.15em, topsep=0pt, leftmargin=*]

\item \emph{Impact of robustness}: The results of this experiment once again 
  highlight the out-of-sample benefits of incorporating robustness into the 
  system: the Sharpe ratio of the DR system is the highest. Although the PO
  system has the highest return, we must keep in mind that the task loss 
  function was a weighted combination of Sharpe ratio and prediction error, 
  meaning, by design, a higher Sharpe ratio is desirable.
  
\item \emph{Isolated learning of \(\gamma\)}: Comparing the PO system with the 
  nominal system, we observe that only learning \(\gamma\) may not be beneficial 
  to the out-of-sample performance of a system. However, when robustness is 
  incorporated into the system, the out-of-sample performance is greatly enhanced. 

\end{itemize}

\begin{figure}[ht]
\begin{center}
\centerline{\includegraphics[width=\columnwidth]{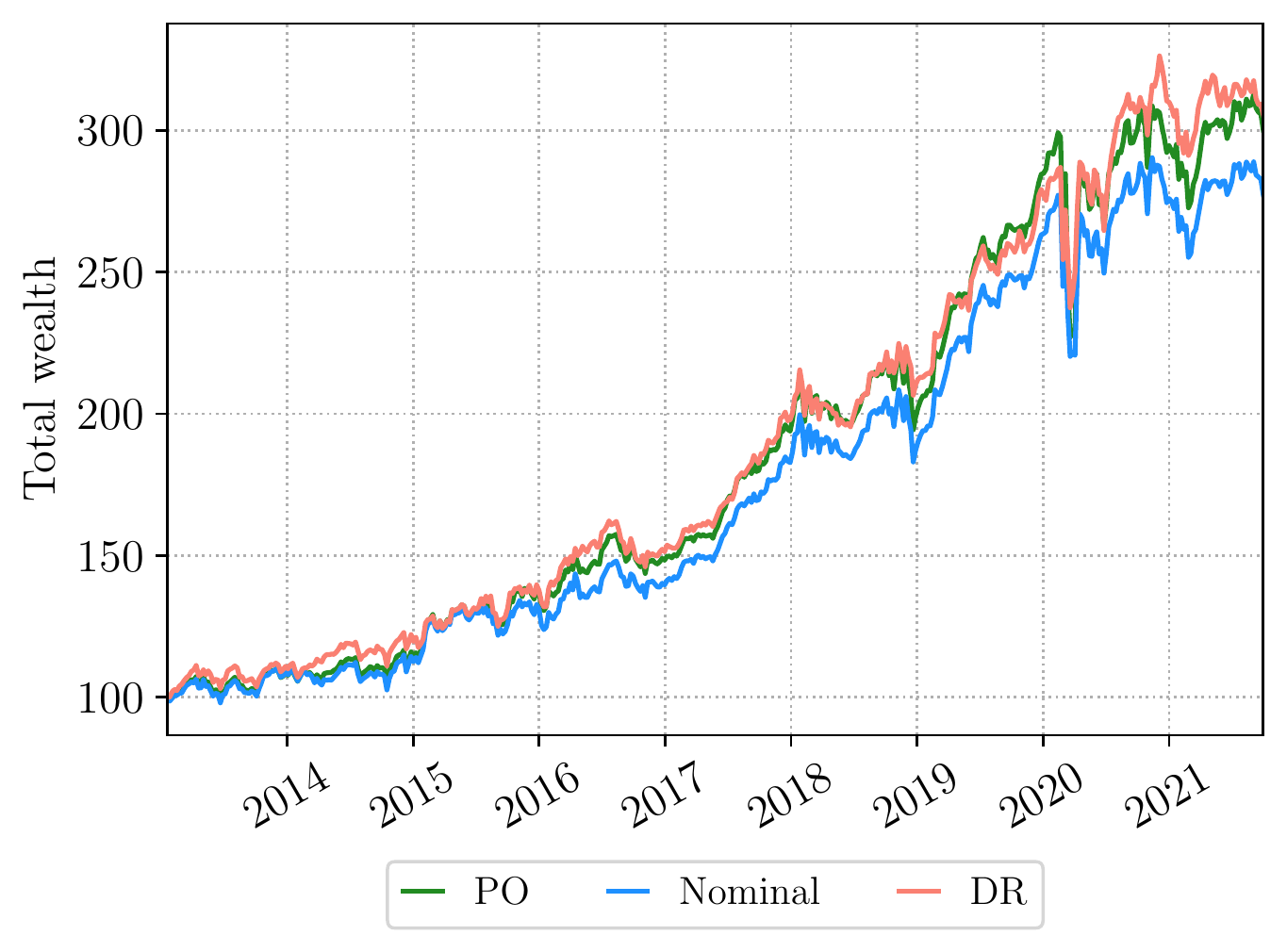}}
\caption{Experiment 3 -- Wealth evolution}
\label{fig:exp3_wealth}
\end{center}
\vskip -0.2in
\end{figure}

\begin{table}[t]
\caption{Experiment 3 -- Results}
\centering
\begin{tabular}{lrrr}
\toprule
& PO & Nom. & DR\\[0.5ex]
\midrule
Return (\%)     & 13.4 & 12.4 & 13.6\\
Volatility (\%) & 15.3 & 14.7 & 13.4\\
Sharpe ratio    & 0.88 & 0.84 & 1.02\\
\bottomrule
\multicolumn{4}{p{0.6\linewidth}}{\small\rule{0pt}{3ex}\textbf{Note(s)}:
  Values are annualized.}\\
\end{tabular}
\label{table:exp3_results}
\end{table}

\subsubsection{Experiment 4: Learning \texorpdfstring{\(\theta\)}{Lg}}\label{sec:learn_theta}

The fourth experiment assesses the out-of-sample impact of learning the 
prediction layer weights \(\bm{\theta}\). We tested four investment systems:
the PO system, the base system, the nominal system with constant \(\gamma\), 
and the DR system with constant \(\gamma\) and \(\delta\). Details of the 
four systems are presented in Table~\ref{table:exp4_models}.

\begin{table}[t]
\caption{Experiment 4 -- List of models}
\centering
\begin{tabular}{lrcr@{}rcr@{}rc}
\toprule
		& \multicolumn{2}{c}{\(\bm{\theta}\)} & & \multicolumn{2}{c}{\(\gamma\)} & & \multicolumn{2}{c}{\(\delta\)}\\[0.5ex] \cline{2-3} \cline{5-6} \cline{8-9}
\rule{0pt}{3ex}System 	& Val. 	& Lrn    && Val. & Lrn 	  && Val. & Lrn \\
\midrule
PO 		& OLS 	& -      &&	0.046 & -      && -     & -      \\[0.5ex]
Base 	& OLS 	& \cmark &&	-	  & -      && -     & -      \\[0.5ex]
Nominal	& OLS 	& \cmark &&	0.046 & \xmark && - 	& -      \\[0.5ex]
DR		& OLS 	& \cmark &&	0.046 & \xmark && 0.312 & \xmark \\[0.25ex]
\bottomrule
\multicolumn{9}{p{0.9\linewidth}}{\small\rule{0pt}{3ex}\textbf{Note}: Val, Initial value; Lrn, Learnable.}\\
\end{tabular}
\label{table:exp4_models}
\end{table}

The out-of-sample financial performance of the four investment systems is 
presented in Figure~\ref{fig:exp4_wealth} and Table~\ref{table:exp4_results}. 
The experimental results lead to the following observations.

\begin{itemize}[itemsep=-0.15em, topsep=0pt, leftmargin=*]
  
\item \emph{Impact of incorporating a deviation risk measure}: The
  difference in the out-of-sample performance between the base system and 
  the PO system clearly highlights the benefit of the deviation risk 
  measure -- even though the base system learns \(\bm{\theta}\), this 
  is not enough to combat the inherent uncertainty of the portfolio 
  returns.

\item \emph{Learning \(\bm{\theta}\)}: Learning \(\theta\) becomes 
  advantageous once a risk measure is added to the decision layer. The 
  PO and nominal systems differ only in that the nominal system is able 
  to learn values of \(\bm{\theta}\) that differ from the OLS weights.
  Thus, as suggested by \citet{donti2017task}, learning \(\bm{\theta}\) 
  enhances the mapping of the prediction layer from the feature space to 
  the asset space to extract a higher quality decision, but only if the 
  impact of prediction error in the decision layer is properly modeled.

\item \emph{Impact of robustness}: Comparing the nominal and DR systems, 
  we can see that incorporating robustness may not always be advantageous, in 
  particular when the robustness sizing parameter \(\delta\) is not optimally 
  calibrated. The results in Table~\ref{table:exp4_results} indicate that 
  \(\delta\) was set too conservative -- even though the volatility of the DR 
  system is lower than the nominal, it incurs a large opportunity cost with 
  regards to the portfolio return.
  
\end{itemize}

\begin{figure}[ht]
\begin{center}
\centerline{\includegraphics[width=\columnwidth]{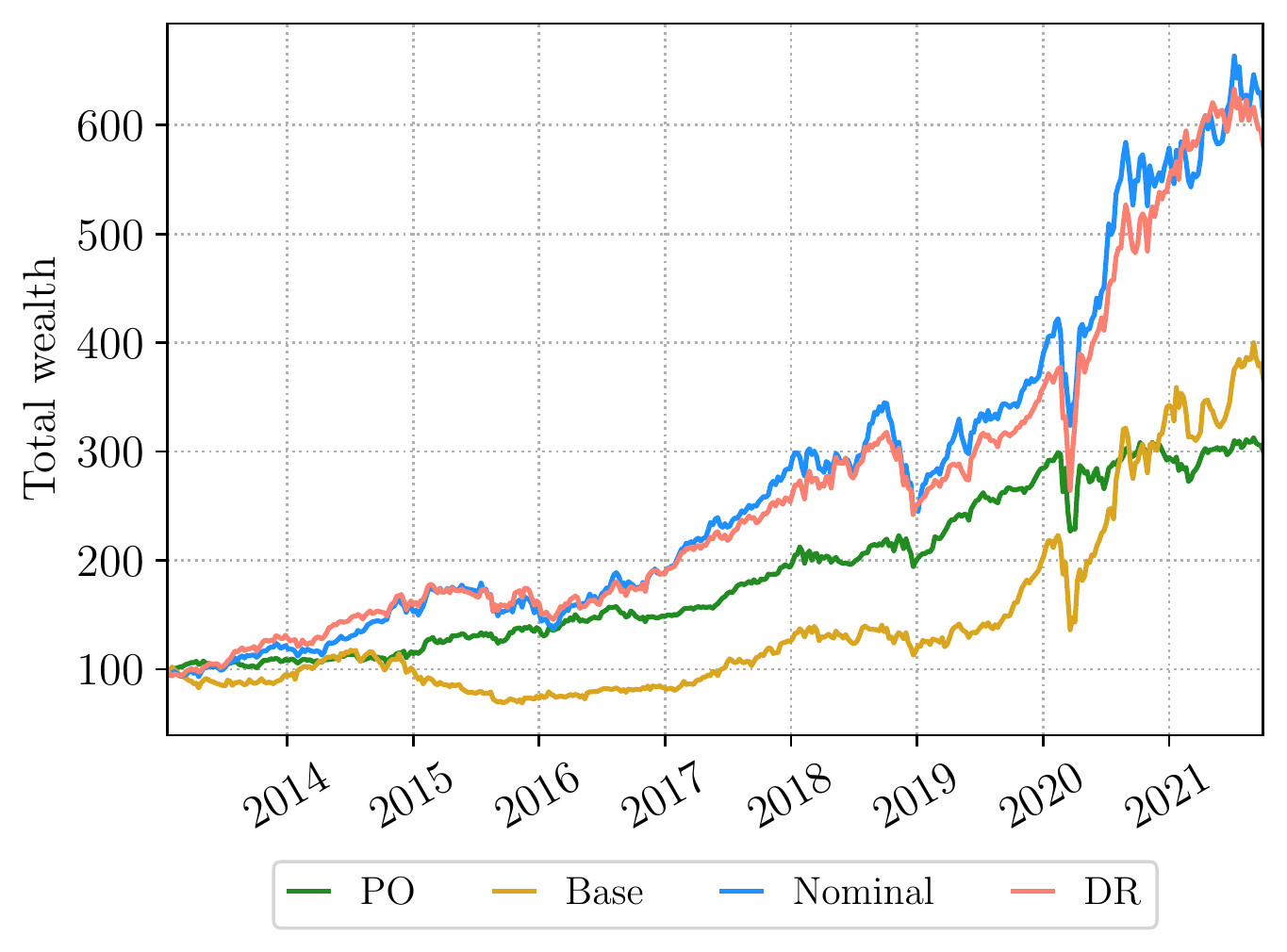}}
\caption{Experiment 4 -- Wealth evolution}
\label{fig:exp4_wealth}
\end{center}
\vskip -0.2in
\end{figure}

\begin{table}[t]
\caption{Experiment 4 -- Results}
\centering
\begin{tabular}{lrrrr}
\toprule
				& PO & Base & Nom. & DR \\
\midrule
Return (\%)     & 13.4 & 16.1 & 23.0 & 22.3 \\
Volatility (\%) & 15.3 & 25.0 & 19.1 & 18.5 \\
Sharpe ratio    & 0.88 & 0.64 & 1.21 & 1.21 \\
\bottomrule
\multicolumn{5}{p{0.6\linewidth}}{\small\rule{0pt}{3ex}\textbf{Note(s)}: Values are annualized.}\\
\end{tabular}
\label{table:exp4_results}
\end{table}

\subsection{Experiment with synthetic data}\label{sec:exp_synt}

We simulate an investment universe with 10~assets and 5~features. The features 
were assumed to follow an uncorrelated zero-mean Brownian motion. The asset returns 
are generated from a linear model of the features,
\[
    \bm{y}_t = \bm{\alpha} + \bm{\beta}^\top \bm{x}_t + \bm{\xi}_t + \kappa_t \cdot \bm{\omega}_t,
\]
where \(\bm{\alpha}\sim\mathcal{U}(0, 0.015)\in\mathbb{R}^n\) is the vector of biases,
\(\bm{\beta}\in\mathbb{R}^{m\times n}\) where \(\beta_{ij}\sim\mathcal{N}(0, 0.015)\ 
\forall\ i,j\) is the matrix of weights (i.e., loadings), \(\bm{\xi}_t\in\mathbb{R}^n\) 
where \(\xi_t^i\sim\mathcal{N}(0,0.015)\ \forall\ i\) is a Gaussian noise vector, 
\(\bm{\omega}_t\in\mathbb{R}^n\) where \(\omega_t^i\sim\mathrm{Exp}(0.015)\ \forall\ i\) 
is an exponential noise vector, and \(\kappa_t\) is a discrete random variable that 
takes the value \(-1,0,1\) with probability 0.15, 0.7, 0.15, respectively, and serves 
to periodically perturb the asset return process bidirectionally with the magnitude of 
\(\bm{\omega}_t\) in order to simulate `jumps' in the asset returns.

The experimental data set is 
composed of 1,200 observations, with the first 840 reserved for training and 
the remaining 360 reserved for testing. However, unlike previous experiments, 
the synthetic data allows us to use a single fold for validation. Here, the 
complete training set is separated into a single training subset and single 
validation subset. The validation results are shown in Table~\ref{table:exp5_val} 
in Appendix~\ref{app:add_exp}. The table also highlights the optimal 
hyperparameters selected for each system. 

The fifth experiment explored the advantage of robustness when the prediction 
layer is more complex, e.g. a neural network with multiple hidden layers. Specifically, 
this experiment compares nominal and DR systems when the prediction layer is either 
a linear model, or neural network with two or three hidden layers. The neural networks 
had fully connected hidden layers, and the activation functions were rectified linear 
units (ReLUs).

The prediction layer in this experiment is initialized to random weights using the 
standard PyTorch~\citep{paszke2019pytorch} initialization mechanism. This applies 
to all three prediction layer designs tested in this experiment. The initial values 
of the risk appetite parameter \(\gamma\) and robustness parameter \(\delta\) were 
sampled uniformly from the same intervals as the previous experiments (see 
Appendix~\ref{app:init} for the initialization methodology).

The objective of the experiment is to investigate whether robustness enhances 
the out-of-sample performance of a system with a more complex prediction 
layer, i.e., with a prediction layer that is more difficult to train accurately. 
To avoid biases pertaining to the design of the prediction layer, our assessment 
is based on a pairwise comparison between the two and three-layer systems. 
Details of the four investment systems are presented in 
Table~\ref{table:exp5_models}.

\begin{table}[t]
\caption{Experiment 5 -- List of models}
\centering
\begin{tabular}{lccr@{}rcr@{}rc}
\toprule
& \multicolumn{2}{c}{\(\bm{\theta}\)} && \multicolumn{2}{c}{\(\gamma\)} && \multicolumn{2}{c}{\(\delta\)}\\[0.5ex] \cline{2-3} \cline{5-6} \cline{8-9}
\rule{0pt}{3ex}System 	& Val. 	& Lrn    && Val. & Lrn 	  && Val. & Lrn \\
\midrule
\multicolumn{2}{l}{\textbf{Linear}}\\[0.5ex]
Nom. & Note 1 & \cmark && 0.089 & \cmark && -     & -      \\[0.5ex]
DR 	 & Note 1 & \cmark && 0.089 & \cmark && 0.146 & \cmark \\[2ex]
\multicolumn{2}{l}{\textbf{2-layer}}\\[0.5ex]
Nom. & Note 1 & \cmark && 0.081 & \cmark && -     & -      \\[0.5ex]
DR 	 & Note 1 & \cmark && 0.081 & \cmark && 0.358 & \cmark \\[2ex]
\multicolumn{2}{l}{\textbf{3-layer}}\\[0.5ex]
Nom. & Note 1 & \cmark && 0.072 & \cmark && -     & -      \\[0.5ex]
DR	 & Note 1 & \cmark && 0.072 & \cmark && 0.163 & \cmark \\[0.25ex]
\bottomrule
\multicolumn{9}{p{0.95\linewidth}}{\small\rule{0pt}{3ex}\textbf{Notes}: (1) \(\bm{\theta}\) is initialized to the same values for each pair of systems. (2) Val, Initial value; Lrn, Learnable.}\\
\end{tabular}
\label{table:exp5_models}
\end{table}

The out-of-sample financial performance of the four investment systems is 
summarized in Figure~\ref{fig:exp5_wealth} and Table~\ref{table:exp5_results}. 
As previously noted, our assessment is based on a pairwise comparison between 
the linear, and two- and three-layer systems. The experimental results 
lead to the following observation.

\begin{itemize}[itemsep=-0.15em, topsep=0pt, leftmargin=*]
  
\item \emph{Impact of robustness}: As indicated by the Sharpe ratios in
  Table~\ref{table:exp5_results}, introducing distributional robustness  
  greatly enhances the portfolio out-of-sample performance in all three cases. 
  Recall that, unlike previous experiments where the prediction layers were
  initialized to the naturally intuitive OLS weights, the prediction layer in 
  the current systems are initialized to fully randomized weights. Thus, 
  through this experiment we can appreciate how robustness protects the 
  portfolios from model error, particularly as the complexity of the 
  prediction layer increases. 
    
\end{itemize}

\begin{figure*}[ht]
\begin{center}
\centerline{\includegraphics[width=\textwidth]{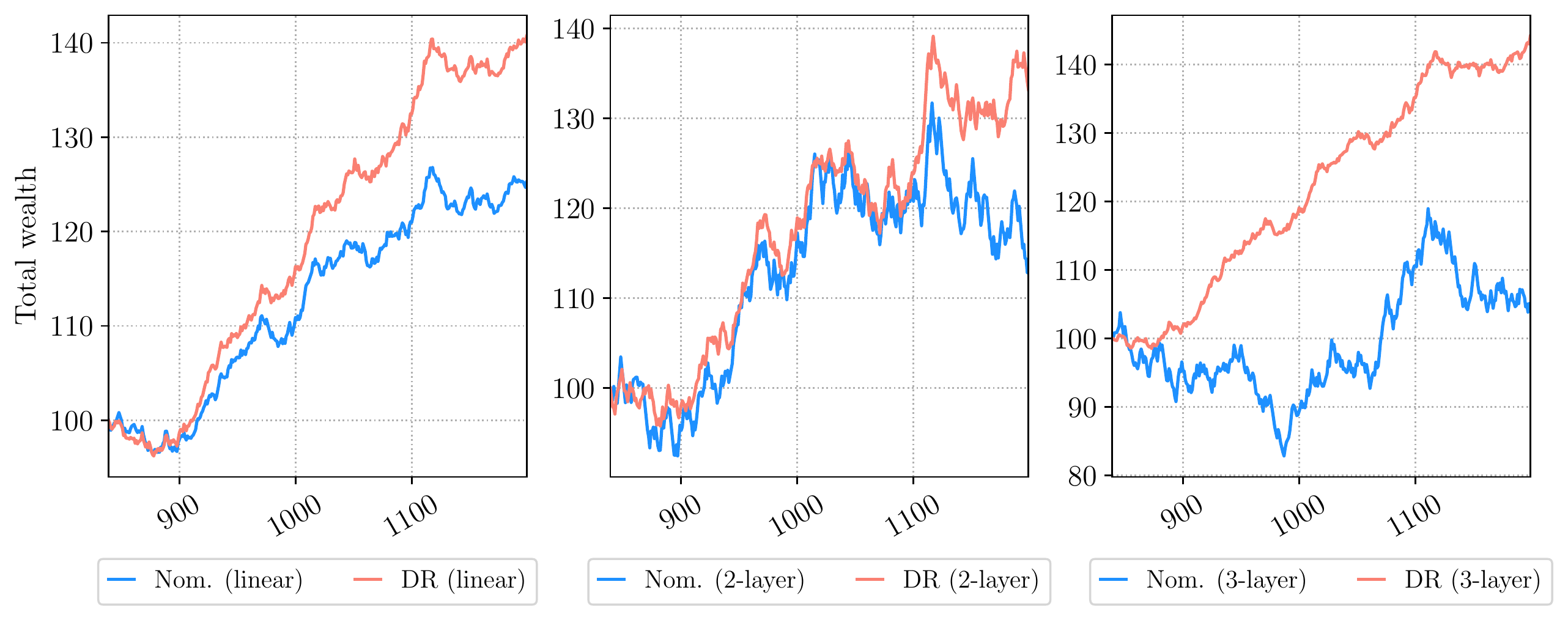}}
\caption{Experiment 5 -- Wealth evolution}
\label{fig:exp5_wealth}
\end{center}
\vskip -0.2in
\end{figure*}

\begin{table}[t]
\caption{Experiment 5 -- Results}
\centering
\begin{tabular}{lrrr@{}rrr@{}rr}
\toprule
				& \multicolumn{2}{c}{Linear} && \multicolumn{2}{c}{2-layer} && \multicolumn{2}{c}{3-layer}\\[0.5ex] \cline{2-3} \cline{5-6} \cline{8-9}
\rule{0pt}{3ex}& Nom. & DR && Nom. & DR && Nom. & DR \\
\midrule
Return (\%) & 3.30 & 5.10 && 1.80 & 4.20 && 0.70 & 5.40 \\
Vol. (\%) 	& 2.90 & 2.70 && 8.30 & 5.80 && 7.80 & 2.60 \\
SR    		& 1.16 & 1.88 && 0.21 & 0.73 && 0.08 & 2.11 \\
\bottomrule
\multicolumn{9}{p{0.99\linewidth}}{\small\rule{0pt}{3ex}\textbf{Notes}: (1) Vol, Volatility; SR, Sharpe ratio. (2) Values are annualized.}\\
\end{tabular}
\label{table:exp5_results}
\end{table}

\section{Conclusion}\label{sec:conclusion}

This paper introduces a novel DR end-to-end learning system for 
portfolio construction. Specifically, the system integrates a prediction 
layer of asset returns with a decision layer that selects portfolios by 
solving a DR optimization problem that explicitly incorporates model 
risk.

The decision layer in our end-to-end system consists of a DR portfolio 
selection problem that gets as input both the point prediction and the 
prediction errors. The prediction errors are used to define a deviation 
risk measure, and we penalize the performance of a portfolio with the 
worst-case risk over a set of probability measures. We show that, even 
though the deviation risk measure is not a linear function of the probability 
measure, one can still use convex duality to reformulate the minimax DR 
optimization problem into an equivalent minimization problem. This 
reformulation is critical to ensure that the gradients with respect to the 
task loss can be back-propagated to the prediction layer. Numerical experiments 
clearly show that incorporating model robustness via a DR layer leads to enhanced 
financial performance.

The parameters that control risk and robustness are learnable as part of
the end-to-end system, meaning these two parameters are optimized directly
from data based on the system's out-of-sample performance. The learnability 
of the robustness parameter is a consequence of the convex duality that we 
use to reformulate the DR layer. Setting these two parameters in
practice is a challenge, and our end-to-end system relieves the user from
having to set them. Furthermore, our numerical experiments show that these
parameters do significantly impact performance.

We have implemented our proposed robust end-to-end system in Python, and
have made the code available on GitHub. We anticipate the DR approach 
to be very impactful for any application where model risk is an important
consideration. Portfolio construction is only the beginning.

\bibliography{MyBib}
\bibliographystyle{apalike}

\begin{appendices}
\section{Proof of Proposition \ref{prop:dev_risk}}\label{app:dev_risk}

We begin with some preliminary information. Recall that \(\bm{\epsilon} = 
\{\bm{\epsilon}_j\in\mathbb{R}^n: j = 1, \ldots, T\}\) denotes the finite 
set of prediction error outcomes and \(\bm{p} \in \mathcal{Q}\) is a PMF.

Here we prove the four properties of the deviation risk measure 
\(f_{\bm{\epsilon}}\) outlined in Proposition \ref{prop:dev_risk}.

\begin{enumerate}[itemsep=-0.15em, topsep=0pt, leftmargin=*]

\item \(f_{\bm{\epsilon}}(\cdot, \bm{p}): \mathcal{Z} \rightarrow \mathbb{R}_+\) 
  is convex for any fixed \(\bm{p}\in\mathcal{Q}\).

  \begin{proof}
  Since \(R(X)\) is convex, it follows that 
  \begin{equation}
  \label{eq:h_ep}
	h_{\bm{\epsilon}}(c,\bm{z},\bm{p}) \triangleq \sum_{j=1}^T p_j\cdot R(\bm{\epsilon}_j^\top\bm{z} - c)
  \end{equation}
  is a convex function in \((c,\bm{z})\) for a fixed \(\bm{p}\). Accordingly,  
  \(f_{\bm{\epsilon}}(\bm{z}, \bm{p}) = \min_{c} h_{\bm{\epsilon}}(c,\bm{z},\bm{p})\) 
  is also a convex function of \(\bm{z}\) for fixed \(\bm{p}\)~\citep{boyd2004convex}. 
  \end{proof} 
	
  \item \(f_{\bm{\epsilon}}(\bm{z}, \bm{p})\geq 0\) for all 
    \(\bm{p}\in\mathcal{Q}\) and \(\bm{z}\in\mathcal{Z}\).
	
  \begin{proof} 
  By definition, we have that \(R\) is a non-negative function. Moreover, 
  \(\bm{p}\) is constrained to the probability simplex (specifically, 
  \(p_j\geq 0\ \forall\ j\)). Since, \(f_{\bm{\epsilon}}\) is the sum of
  \(T\) functions, each of which results from the product of two non-negative 
  elements, then \(f_{\bm{\epsilon}}(\bm{z}, \bm{p})\geq 0\). 
  \end{proof}
	
  \item \(f_{\bm{\epsilon}}(\bm{z}, \bm{p})\) is shift-invariant with respect 
  to \(\bm{\epsilon}\), i.e. \(f_{\bm{\epsilon}+\bm{a}}(\bm{z}, \bm{p}) = 
  f_{\bm{\epsilon}}(\bm{z}, \bm{p}) \) for any fixed vector \(\bm{a} \in 
  \mathbb{R}^n\).
  \begin{proof} 
  By definition, we have that
  \begin{align*}
  	f_{\bm{\epsilon}+\bm{a}}(\bm{z}, \bm{p}) 
	&= \min_c \bigg\{\sum_{j=1}^T p_j\cdot R\big((\bm{\epsilon}_j 
		+ \bm{a})^\top \bm{z} - c\big)\bigg\},\\
	&= \min_c \bigg\{\sum_{j=1}^T p_j\cdot R\big(\bm{\epsilon}_j^\top
		\bm{z} - (c - \bm{a}^\top\bm{z})\big)\bigg\},\\
	&= \min_{c'} \bigg\{\sum_{j=1}^T p_j\cdot R\big(\bm{\epsilon}_j^\top
		\bm{z} - c')\big)\bigg\},
  \end{align*}
  where \(c' = c - \bm{a}^\top \bm{z}\). The result follows from the last 
  expression.
  \end{proof}
		
  \item \(f_{\bm{\epsilon}}(\bm{z}, \bm{p})\) is symmetric with respect 
  to \(\bm{\epsilon}\), i.e. \(f_{\bm{\epsilon}}(\bm{z}, \bm{p}) = 
  f_{-\bm{\epsilon}}(\bm{z}, \bm{p}) \).
  \begin{proof}
  By definition, we have that
  \begin{align}
	f_{-\bm{\epsilon}}(\bm{z}, \bm{p}) 
	&= \min_c \bigg\{\sum_{j=1}^T p_j\cdot R\big(-\bm{\epsilon}_j^\top \bm{z} 
	- c\big)\bigg\},\nonumber \\
	&= \min_c \bigg\{\sum_{j=1}^T p_j\cdot R\big(\bm{\epsilon}_j^\top \bm{z} 
	+ c \big)\bigg\}, \label{eq:Rsym}\\
	&= \min_{c'} \bigg\{\sum_{j=1}^T p_j\cdot R\big(\bm{\epsilon}_j^\top \bm{z} 
	- c')\big)\bigg\}, \nonumber
  \end{align}
  where \(c' = - c \), and \eqref{eq:Rsym}  follows from the fact that 
  \(R(X) = R(-X)\). The result follows from the last expression.
  \end{proof}
          		
\end{enumerate}

\section{Portfolio error variance}\label{app:variance}

Recall that \(\bm{\epsilon} = \{\bm{\epsilon}_j\in\mathbb{R}^n: j = 1, 
\ldots, T\}\) denotes the finite set of prediction error outcomes. For 
a fixed distribution \(\bm{p}\in\mathcal{Q}\), the expected error and 
its corresponding covariance matrix are 
\begin{align}
	\bm{\mu}_{\bm{\epsilon}}(\bm{p}) 
		&\triangleq \sum_{j=1}^T p_j \cdot \bm{\epsilon}_j,\label{eq:mu}\\ 
	\bm{\Sigma}_{\bm{\epsilon}}(\bm{p}) 
		&\triangleq \sum_{j=1}^T p_j\cdot \big(\bm{\epsilon}_j - 
			\bm{\mu}_{\bm{\epsilon}}(\bm{p})\big)\big(\bm{\epsilon}_j 
			- \bm{\mu}_{\bm{\epsilon}}(\bm{p})\big)^\top,\label{eq:Sigma} 
\end{align}
where \(\bm{\mu}_{\bm{\epsilon}}(\bm{p})\in\mathbb{R}^n\) and
\(\bm{\Sigma}_{\bm{\epsilon}}(\bm{p})\in\mathbb{R}^{n\times n}\). The
matrix \(\bm{\Sigma}_{\bm{\epsilon}}(\bm{p})\) results from the weighted
sum of \(T\) rank-1 symmetric matrices, meaning it is guaranteed to be
positive semidefinite. 

Note that \(\bm{\Sigma}_{\bm{\epsilon}}(\bm{p})\) is a non-linear function
of \(\bm{p}\); consequently, the portfolio variance \(\bm{z}^\top
\hat{\bm{\Sigma}}_{\bm{\epsilon}}(\bm{p}) \bm{z}\) is also non-linear in
\(\bm{p}\).  Thus, in this form, the worst-case variance
\(\max_{\bm{p} \in \mathcal{P}(\delta)} \bm{z}^\top
\hat{\bm{\Sigma}}_{\bm{\epsilon}}(\bm{p}) \bm{z}\) does not have the
correct convexity properties that allow one to reformulate the problem
using duality. We resolve this issue by recasting the portfolio error
variance into the form prescribed by Proposition \ref{prop:dev_risk},
which yields the following corollary.  

\begin{corollary}
\label{coroll:variance}
	For a fixed \(\bm{z}\in\mathcal{Z}\) and \(\bm{p}\in\mathcal{Q}\), 
	the portfolio variance is 
	\[
		\bm{z}^\top \bm{\Sigma}_{\bm{\epsilon}}(\bm{p}) \bm{z} =
                \min_{c}\ \sum_{j=1}^T p_j\cdot (\bm{\epsilon}_j^\top
                \bm{z} - c)^2, 
	\]
	where \(c\in\mathbb{R}\) is an unrestricted auxiliary variable.
\end{corollary}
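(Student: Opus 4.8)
The plan is to show that the unconstrained minimization over $c$ on the right-hand side is solved in closed form, and that plugging the optimal $c^*$ back in recovers exactly the weighted variance $\bm{z}^\top \bm{\Sigma}_{\bm{\epsilon}}(\bm{p})\bm{z}$ defined in \eqref{eq:Sigma}. Concretely, set $a_j \triangleq \bm{\epsilon}_j^\top \bm{z} \in \mathbb{R}$, so that the right-hand side becomes $\min_c \sum_{j=1}^T p_j (a_j - c)^2$. This is a one-dimensional convex quadratic in $c$ (the coefficient of $c^2$ is $\sum_j p_j = 1 > 0$), so it is minimized where its derivative vanishes.

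First I would differentiate: $\frac{d}{dc}\sum_j p_j (a_j - c)^2 = -2\sum_j p_j (a_j - c) = -2\big(\sum_j p_j a_j - c\big)$, which is zero precisely at $c^* = \sum_{j=1}^T p_j a_j = \sum_{j=1}^T p_j \bm{\epsilon}_j^\top \bm{z} = \bm{\mu}_{\bm{\epsilon}}(\bm{p})^\top \bm{z}$, using the definition \eqref{eq:mu} of $\bm{\mu}_{\bm{\epsilon}}(\bm{p})$. Next I would substitute $c^*$ back in to get $\sum_{j=1}^T p_j \big(\bm{\epsilon}_j^\top \bm{z} - \bm{\mu}_{\bm{\epsilon}}(\bm{p})^\top \bm{z}\big)^2 = \sum_{j=1}^T p_j \big((\bm{\epsilon}_j - \bm{\mu}_{\bm{\epsilon}}(\bm{p}))^\top \bm{z}\big)^2$. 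Then, writing $\big((\bm{\epsilon}_j - \bm{\mu}_{\bm{\epsilon}}(\bm{p}))^\top \bm{z}\big)^2 = \bm{z}^\top (\bm{\epsilon}_j - \bm{\mu}_{\bm{\epsilon}}(\bm{p}))(\bm{\epsilon}_j - \bm{\mu}_{\bm{\epsilon}}(\bm{p}))^\top \bm{z}$ and pulling $\bm{z}^\top(\cdot)\bm{z}$ out of the (finite) sum, this equals $\bm{z}^\top \Big(\sum_{j=1}^T p_j (\bm{\epsilon}_j - \bm{\mu}_{\bm{\epsilon}}(\bm{p}))(\bm{\epsilon}_j - \bm{\mu}_{\bm{\epsilon}}(\bm{p}))^\top\Big)\bm{z} = \bm{z}^\top \bm{\Sigma}_{\bm{\epsilon}}(\bm{p})\bm{z}$ by \eqref{eq:Sigma}, which is the claimed identity. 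Since the objective is convex and coercive in $c$, the stationary point is the global minimizer, so the ``$\min$'' is attained and well-defined.

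This proof is essentially routine — there is no real obstacle, only bookkeeping. If anything merits a comment it is the (trivial) justification that the minimization over the unconstrained scalar $c$ is attained: this follows because the map $c \mapsto \sum_j p_j(a_j-c)^2$ is a strictly convex quadratic (strict because $\sum_j p_j = 1$), hence has a unique global minimizer given by the first-order condition. It is also worth remarking in passing that this is exactly the specialization of Proposition~\ref{prop:dev_risk} to $R(X) = X^2$ (which is closed, convex, non-negative, satisfies $R(0)=0$ and $R(X)=R(-X)$), so the corollary is really just identifying the abstract $f_{\bm{\epsilon}}(\bm{z},\bm{p})$ from \eqref{eq:dev_risk} with the concrete weighted portfolio variance in that case; one could alternatively phrase the whole argument as ``apply Proposition~\ref{prop:dev_risk} with $R(X)=X^2$ and compute $c^*$.''
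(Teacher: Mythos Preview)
Your proof is correct and follows essentially the same approach as the paper: compute the derivative of the quadratic in \(c\), identify the unique minimizer \(c^* = \bm{\mu}_{\bm{\epsilon}}(\bm{p})^\top \bm{z}\), and substitute back to recover \(\bm{z}^\top \bm{\Sigma}_{\bm{\epsilon}}(\bm{p})\bm{z}\) via \eqref{eq:Sigma}. Your added remark linking this to Proposition~\ref{prop:dev_risk} with \(R(X)=X^2\) is a nice contextual note but does not alter the argument.
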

\begin{proof}
	Let \(h(c) = \sum_{j=1}^T p_j\cdot (\bm{\epsilon}_j^\top \bm{z} -
        c)^2\) and recall that \(\sum_{j=1}^T p_j = 1\). Then 
	\begin{align}
		\frac{dh}{dc} &= 2 \sum_{j=1}^T (p_j\cdot c - p_j\cdot \bm{\epsilon}_j^\top \bm{z})\nonumber\\
			&= 2c - 2\sum_{j=1}^T p_j\cdot \bm{\epsilon}_j^\top \bm{z},\nonumber\\
		\frac{d^2 h}{dc^2} &= 2.\nonumber
	\end{align}
	Since \(d^2 h/dc^2 > 0\ \forall c\in\mathbb{R}\), then \(h(c)\) is
        strictly convex in \(c\). Therefore, solving for \(c\) such that
        \(dh/dc = 0\) suffices to find the unique global minimizer
        \(c^*\),  
	\[
	\begin{aligned}
		\frac{dh}{dc}\bigg|_{c=c^*} &= 2c^* - 2\sum_{j=1}^T
                p_j\cdot \bm{\epsilon}_j^\top \bm{z} = 0,\\
		c^* &= \sum_{j=1}^T p_j\cdot \bm{\epsilon}_j^\top \bm{z},
	\end{aligned}
	\]
	for all \(\bm{p}\in\mathcal{Q}\) and \(\{\bm{\epsilon}_j: j = 1, \ldots, T\}\). Then, by \eqref{eq:mu}, we have
	\[
		c^* = \sum_{j=1}^T p_j\cdot \bm{\epsilon}_j^\top \bm{z} =
                \bm{z}^\top \sum_{j=1}^T p_j\cdot \bm{\epsilon}_j =
                \bm{z}^\top \bm{\mu}_{\bm{\epsilon}}(\bm{p}).
	\]
	Now, using \eqref{eq:Sigma}, the portfolio error variance is
	\begin{align*}
          \bm{z}^\top \bm{\Sigma}_{\bm{\epsilon}}(\bm{p})
          \bm{z} &= \sum_{j=1}^T p_j\cdot
                   \bm{z}^\top\big(\bm{\epsilon}_j -
                   \bm{\mu}_{\bm{\epsilon}}(\bm{p})\big)\big(\bm{\epsilon}_j
                   - \bm{\mu}_{\bm{\epsilon}}(\bm{p})\big)^\top
                   \bm{z}\\ 
                 &= \sum_{j=1}^T p_j\cdot
                   \Big(\bm{z}^\top\big(\bm{\epsilon}_j -
                   \bm{\mu}_{\bm{\epsilon}}(\bm{p})\big)\Big)^2\\ 
                 &= \sum_{j=1}^T p_j\cdot \big(\bm{z}^\top
                   \bm{\epsilon}_j - \bm{z}^\top
                   \bm{\mu}_{\bm{\epsilon}}(\bm{p})\big)^2\\ 
                 &= \sum_{j=1}^T p_j\cdot (\bm{\epsilon}_j^\top \bm{z} - c^*)^2\\
			&= \min_{c} \sum_{j=1}^T p_j\cdot (\bm{\epsilon}_j^\top \bm{z} - c)^2,
	\end{align*}
	as desired. 
\end{proof} 

\section{Dualizing the DR layer}\label{app:dual}

Recall that \(h_{\bm{\epsilon}}(c,\bm{z},\bm{p})\) in \eqref{eq:h_ep} is
a convex function in \((c, \bm{z})\) for any fixed 
\(\bm{p}\in\mathcal{Q}\). Moreover, \(h_{\bm{\epsilon}}(c,\bm{z},\bm{p})\) 
is linear in \(\bm{p}\) for any fixed 
\((c,\bm{z})\in\mathbb{R}\times\mathcal{Z}\). Thus,
\(h_{\bm{\epsilon}}(c,\bm{z},\bm{p})\) is convex--linear in \((c,\bm{z})\)
and \(\bm{p}\), respectively.  

Fix \(\bm{z} \in \mathcal{Z}\). Then, the  minimax theorem for convex duality~\citep{neumann1928theorie} implies that 
\begin{align*}
	\max_{\bm{p}\in\mathcal{P}(\delta)} f_{\bm{\epsilon}}(\bm{z},\bm{p})
	&= \max_{\bm{p}\in\mathcal{P}(\delta)} \min_{c\in\mathbb{R}} h_{\bm{\epsilon}}(c,\bm{z},\bm{p})\\
	&=\min_{c\in\mathbb{R}} \max_{\bm{p}\in\mathcal{P}(\delta)} \sum_{j=1}^T p_j \cdot R(\bm{\epsilon}_j^\top \bm{z} - c).
\end{align*}

Next, we adapt the results in \citet{ben2013robust} to write the 
maximization over \(\bm{p}\) as a dual minimization problem. Note that 
this transformation is straightforward, and is only included for completeness.

Fix \((c,\bm{z})\in\mathbb{R}\times\mathcal{Z}\). From the definition of the ambiguity set
\(\mathcal{P}(\delta)\), it follows that the maximization 
problem in \(\bm{p}\) is
\[
\begin{array}{ll}
    \displaystyle \max_{\bm{p}} 
    &\displaystyle \sum_{j=1}^T p_j \cdot R(\bm{\epsilon}_j^\top \bm{z} - c),\\
    \ \text{s.t.} &\displaystyle \sum_{j=1}^T p_j = 1,\\
    &\displaystyle I_\phi(\bm{p},\bm{q}) =
      \sum_{j=1}^T q_j\cdot \phi(p_j/q_j) \leq \delta,\\
    & \bm{p} \geq \bm{0}.
\end{array}
\]
Associate a dual variable \(\xi\) with the constraint \(\sum_{j=1}^T
p_j = 1\) and a dual variable \(\lambda \geq 0\) with the constraint
\(I_\phi(\bm{p},\bm{q}) \leq \delta\). Then, the Lagrangian dual function is
\begin{align}
	&f_{\bm{\epsilon}}^{\delta}(\bm{z}, c, \lambda, \xi)\nonumber\\ 
	&\hspace{20pt} \triangleq \max_{\bm{p}\geq \bm{0}} \sum_{j=1}^T p_j \cdot R(\bm{\epsilon}_j^\top \bm{z} - c) + \xi\cdot (1 - \bm{1}^\top\bm{p}) \nonumber\\[-1ex]
	&\hspace{60pt} + \lambda\cdot\bigg(\delta - \sum_{j=1}^T q_j\cdot \phi(p_j/q_j)\bigg) \nonumber\\
	&\hspace{20pt}= \xi + \delta\cdot\lambda + \max_{\bm{p}\geq \bm{0}} \sum_{j=1}^T \Big(p_j\cdot \big(R(\bm{\epsilon}_j^\top \bm{z} - c) - \xi\big)\nonumber\\[-1ex]
	&\hspace{135pt} - \lambda\cdot q_j\cdot \phi(p_j/q_j)\Big)\nonumber\\
	&\hspace{20pt}= \xi + \delta\cdot\lambda + \sum_{j=1}^T \max_{p_j\geq 0} \Big(p_j\cdot \big(R(\bm{\epsilon}_j^\top \bm{z} - c) - \xi\big)\nonumber\\[-1ex]
	&\hspace{135pt} - \lambda\cdot q_j\cdot \phi(p_j/q_j)\Big)\nonumber\\
	&\hspace{20pt}= \xi + \delta\cdot\lambda + \sum_{j=1}^T q_j\cdot \max_{s\geq 0} \Big(s\cdot \big(R(\bm{\epsilon}_j^\top \bm{z} - c) - \xi\big)\nonumber\\[-1ex]
	&\hspace{178pt} - \lambda\cdot \phi(s)\Big)\nonumber\\
	&\hspace{20pt}= \xi + \delta\cdot\lambda + \sum_{j=1}^T q_j\cdot (\lambda \phi)^*\big(R(\bm{\epsilon}_j^\top \bm{z} - c) - \xi\big)\nonumber\\
	&\hspace{20pt}=\xi + \delta\cdot \lambda + \frac{\lambda}{T}
   \sum_{j=1}^T \phi^*\bigg(\frac{R(\bm{\epsilon}_j^\top \bm{z} -
   c) - \xi}{\lambda}\bigg).\label{eq:dual}
\end{align}
Note that we arrive at \eqref{eq:dual} by taking the convex conjugate of
\(\phi\) and by using the identity \((\lambda \phi)^*(w) = \lambda\cdot
\phi^*(w/\lambda)\) for \(\lambda\geq 0\). Additionally, recall that our
nominal assumption states that \(q_j = 1/T\ \forall j\).  

Since the conjugate \(\phi^*\) of a \(\phi\)-divergence is a convex
function, it follows that the Lagrangian function 
\(f_{\bm{\epsilon}}^{\delta}(\bm{z}, c, \lambda, \xi)\) is \emph{jointly} 
convex in \((\bm{z},c,\lambda,\xi) \in \mathcal{Z} \times \mathbb{R} \times
\mathbb{R}_+ \times \mathbb{R}\) (see \citet{ben2013robust} for details).

Finally, note that the Lagrangian dual in \eqref{eq:dual} above is the same
as the distributionally robust deviation risk measure in \eqref{eq:dr_obj}. 
The above steps demonstrate the equivalence between the minimax problem in 
\eqref{eq:dr_minimax} and the minimization problem in \eqref{eq:dr_opt}, i.e.,
\begin{align*}
	&\min_{\bm{z}\in\mathcal{Z}, c}\ \max_{\bm{p}\in\mathcal{P}(\delta)} 
		\sum_{j=1}^T p_j \cdot R(\bm{\epsilon}_j^\top \bm{z} - c) 
		- \gamma \cdot \hat{\bm{y}}_{t}^\top \bm{z}\\[1ex]
	&\hspace*{35pt} \iff 
		\min_{\bm{z}\in\mathcal{Z},\ \lambda\geq 0,\ \xi,\ c}\ 
		f_{\bm{\epsilon}}^{\delta}(\bm{z}, c, \lambda, \xi) 
		- \gamma \cdot \hat{\bm{y}}_{t}^\top \bm{z}.
\end{align*}

\section{Tractable reformulations}\label{app:reform}

Recall that the DR layer in \eqref{eq:dr_opt} corresponds to the 
following minimization problem,
\[
\begin{aligned}
	&\begin{aligned}
	&\min_{\bm{z}, \lambda, \xi, c} \xi + \delta\cdot \lambda + \frac{\lambda}{T} 
	\sum_{j=1}^T \phi^*\bigg(\frac{R(\bm{\epsilon}_j^\top \bm{z} - c) - 
	\xi}{\lambda}\bigg) - \gamma \cdot \hat{\bm{y}}_{t}^\top \bm{z}
	\end{aligned}\\
	&\begin{aligned}
	&\ \text{s.t.} & \lambda &\geq 0\\
	&& \bm{z} &\in\mathcal{Z}
	\end{aligned}	
\end{aligned}
\]
The computational tractability of this optimization problem depends on the 
complexity of the \(\phi\)-divergence selected to construct the ambiguity 
set \(\mathcal{P}(\delta)\).

\subsection{Hellinger distance}

The Hellinger distance is defined as 
\[
	I_\phi^H(\bm{p},\bm{q}) \triangleq \sum_{j=1}^T q_j\cdot \phi_H(p_j/q_j) =
	 \sum_{j=1}^T \big( \sqrt{p_j} - \sqrt{q_j} \big)^2
\]
where \(\phi_H(w) \triangleq (\sqrt{w} - 1)^2\) for \(w\geq 0\). The convex 
conjugate 
\[
	\phi_H^*(s) \triangleq \frac{s}{1-s}\ \text{ for } s < 1.
\]
If we construct the probability ambiguity set \(\mathcal{P}(\delta)\) 
using the Hellinger distance, then the DR layer becomes a highly non-linear 
convex optimization problem.
Nevertheless, the problem can be reformulated into a tractable problem as 
follows. Let \(h(s) \triangleq 1/(1-s) = \phi_H^*(s) + 1\). Then
\(h^{-1}(s) = 1-1/s\) and the objective function is
\[
	\xi + (\delta-1)\cdot \lambda + \frac{1}{T} 
	\sum_{j=1}^T \lambda\cdot h\bigg(\frac{R(\bm{\epsilon}_j^\top \bm{z} - c) - 
	\xi}{\lambda}\bigg) - \gamma \cdot \hat{\bm{y}}_{t}^\top \bm{z}.
\]
Next, introduce the auxiliary variable \(\bm{\beta}\in\mathbb{R}^T\) and let
\begin{align}
	&\beta_j \geq \lambda\cdot h\bigg(\frac{R(\bm{\epsilon}_j^\top \bm{z} - c) - 
	\xi}{\lambda}\bigg)\nonumber\\[1ex]
	&\hspace*{35pt} \iff h^{-1}\bigg(\frac{\beta_j}{\lambda}\bigg) \geq 
	\frac{R(\bm{\epsilon}_j^\top \bm{z} - c) - \xi}{\lambda}\nonumber\\[1ex]
	&\hspace*{35pt} \iff\hspace*{11pt} \lambda - \frac{\lambda^2}{\beta_j}\geq 
	R(\bm{\epsilon}_j^\top \bm{z} - c) - \xi.\nonumber
\end{align}
Note that, since \(\phi_H^*(s)\) requires that \(s<1\), then \(\bm{\beta} \geq \bm{0}\). 

Finally, introduce the auxiliary variable \(\bm{\tau}\in\mathbb{R}_+^T\) to 
arrive at the tractable reformulation of the Hellinger-based DR layer,
\[
\begin{aligned}
	&\begin{aligned}
	&\min_{\bm{z}, \lambda, \xi, c, \bm{\beta}, \bm{\tau}} \xi + 
	(\delta-1)\cdot \lambda + \frac{1}{T} \sum_{j=1}^T \beta_j  
	- \gamma \cdot \hat{\bm{y}}_{t}^\top \bm{z}
	\end{aligned}\\
	&\begin{aligned}
	&\quad\ \ \text{s.t.} & \xi + \lambda &\geq R(\bm{\epsilon}_j^\top \bm{z} - c) 
	+ \tau_j, & j=1,\dots,T\\
	&& \beta_j \tau_j &\geq \lambda^2, & j=1,\dots,T\\
	&& \bm{\beta}, \bm{\tau} &\geq \bm{0},\\
	&& \lambda &\geq 0,\\
	&& \bm{z} &\in\mathcal{Z}.
	\end{aligned}	
\end{aligned}
\]
Note that the hyperbolic constraint, \(\beta_j \tau_j \geq \lambda^2\), can be 
equivalently written as a rotated second-order cone constraint.

\subsection{Variation distance} 
The Variation distance is defined as 
\[
	I_\phi^V(\bm{p},\bm{q}) \triangleq \sum_{j=1}^T q_j\cdot \phi_V(p_j/q_j) =
	 \sum_{j=1}^T \big| p_j - q_j \big|
\]
where \(\phi_V(w) \triangleq |w - 1|\) for \(w\geq 0\). 
The
conjugate 

\[
	\phi_V^*(s) \triangleq \begin{cases}
 							-1,& s\leq -1\\
 							s,& -1\leq s\leq 1 
 							\end{cases}.
\]
The tractable reformulation presented here is adapted from~\citet{ben2013robust}. 
Introduce the auxiliary variable \(\bm{\beta}\in\mathbb{R}^T\). Then, 
\[
\begin{aligned}
	&\begin{aligned}
	&\min_{\bm{z}, \lambda, \xi, c, \bm{\beta}} \xi + \delta\cdot\lambda 
	+ \frac{1}{T} \sum_{j=1}^T \beta_j - \gamma\cdot\hat{\bm{y}}_{t}^\top \bm{z}
	\end{aligned}\\
	&\begin{aligned}
	&\ \text{s.t.} & \beta_j &\geq -\lambda, & j=1,\dots,T\\
	&& \beta_j &\geq R(\bm{\epsilon}_j^\top \bm{z} - c) - \xi, & j=1,\dots,T\\
	&& \lambda &\geq R(\bm{\epsilon}_j^\top \bm{z} - c) - \xi, & j=1,\dots,T\\
	&& \lambda &\geq 0,\\
	&& \bm{z} &\in\mathcal{Z}.
	\end{aligned}	
\end{aligned}
\]

\section{Initializing \texorpdfstring{\(\gamma\)}{Lg} and \texorpdfstring{\(\delta\)}{Lg} during experiments}\label{app:init}

The initial value for the parameter \(\gamma\) was randomly sampled 
from a uniform distribution over a finite interval. The upper and 
lower boundaries of this interval are set such that the nominal 
value of the deviation risk measure, i.e. portfolio error variance,
and the nominal return are comparable for the equal weight portfolio, 
i.e. \(\hat{z}_i = 1/n\) for \(i = 1,\ldots,n\). Note that the error 
variance was computed using errors corresponding to a linear prediction 
model with OLS weights.

Using the same training set as Section \ref{sec:num_ex} with a sample 
set of \(T=104\) prediction errors, we obtained samples of
plausible \(\gamma\) values as follows,
\[
	\hat{\gamma}_j = \frac{\hat{\bm{z}}^\top \bm{\Sigma}_{\bm{\epsilon}}(\bm{q}) \hat{\bm{z}}}{| \hat{\bm{y}}_j^\top \hat{\bm{z}} |} \quad \text{for}\ j=T+1,\ldots, T_0,
\]
where \(\bm{\Sigma}_{\bm{\epsilon}}(\bm{q})\) is calculated as in 
\eqref{eq:Sigma}. To ensure reasonable emphasis is given to the deviation 
risk measure, we set the lower and upper bounds for the interval for
\(\gamma\) to the 1st and 25th percentiles of the sample set, 
respectively. This gave us the interval \([0.02, 0.10]\). 

The initial value for \(\delta\) was also sampled from a uniform 
distribution over a finite interval. For the Hellinger distance,
which we use for our experiments, the maximum distance is 
\(\delta_{\max}  = \max_{p} I^{H}_{\phi}(p,q) = 2(1 - 1/\sqrt{T})\), 
where the nominal distribution \(\bm{q} = \frac{1}{T}\). 
The theoretical upper bound is a useful benchmark, but we note that this 
is an extreme and implausible value. Thus, we set the upper bound 
\(\delta_{\text{up}} = 0.25\cdot\delta_{\text{max}}\), while we set the lower 
bound \(\delta_{\text{lo}} = 0.05\cdot\delta_{\text{max}}\). In Experiments 
1--4, where \(T=104\), the sampling interval is \(\delta\in[0.09, 0.45]\). 
Initializing \(\delta\) from this interval ensures that the distance between 
\(\bm{p}\) and \(\bm{q}\) is not too extreme at the start of the training process.

\onecolumn
\section{Validation and hyperparameter selection}\label{app:add_exp}

\begin{table*}[!ht]
\caption{Average validation loss in Experiments 1--4}
\centering
\begin{tabular}{ccccccr@{}ccccc}
\toprule
		$\eta$ &  Epochs & Base & \multicolumn{3}{c}{Nominal} && \multicolumn{5}{c}{DR} \\[0.5ex] \cline{4-6} \cline{8-12}
&&& \multicolumn{3}{c}{\footnotesize Learned parameters} && \multicolumn{5}{c}{\footnotesize Learned parameters}\\[-0.5ex]
\rule{0pt}{3ex}&& & \(\bm{\theta}, \gamma\) & \(\gamma\) & \(\bm{\theta}\) && \(\bm{\theta}, \gamma, \delta\) & \(\delta\) & \(\gamma\) & \(\gamma, \delta\) & \(\bm{\theta}\) \\
\midrule
0.0050 & 30 & \textbf{-0.1522} & -0.1769 & -0.1682 & -0.1747 && -0.1708 & \textbf{-0.1897} & \textbf{-0.1841} & \textbf{-0.1817} & -0.1714 \\
0.0125 & 30 & -0.1130 & -0.1784 & -0.1672 & -0.1854 && -0.1723 & -0.1779 & -0.1775 & -0.1766 &  -0.1698 \\
0.0200 & 30 & -0.1097 & -0.1809 & -0.1675 & \textbf{-0.1874} && -0.1412 & -0.1785 & -0.1749 & -0.1766 &     -0.1603 \\
\midrule
0.0050 & 40 & -0.1372 & -0.1793 & -0.1680 & -0.1733 && -0.1705 & -0.1888 & -0.1817 & -0.1787 &     -0.1734 \\
0.0125 & 40 & -0.1109 & -0.1767 & -0.1675 & -0.1828 && -0.1703 & -0.1782 & -0.1762 & -0.1770 &  -0.1746 \\
0.0200 & 40 & -0.1044 & -0.1819 & -0.1675 & -0.1848 && -0.1365 & -0.1779 & -0.1752 & -0.1756 & -0.1609 \\
\midrule
0.0050 & 50 & -0.1278 & -0.1820 & -0.1679 & -0.1727 && -0.1748 & -0.1868 & -0.1807 & -0.1780 & -0.1719 \\
0.0125 & 50 & -0.1091 & -0.1772 & -0.1674 & -0.1839 && -0.1752 & -0.1780 & -0.1757 & -0.1757 & -0.1718 \\
0.0200 & 50 & -0.1079 & -0.1887 & -0.1683 & -0.1854 && -0.1329 & -0.1780 & -0.1755 & -0.1679 & -0.1649 \\
\midrule
0.0050 & 60 & -0.1267 & -0.1853 & -0.1679 & -0.1754 && -0.1805 & -0.1777 & -0.1798 & -0.1779 & -0.1689 \\
0.0125 & 60 & -0.1094 & -0.1807 & -0.1677 & -0.1846 && \textbf{-0.1826} & -0.1779 & -0.1761 & -0.1680 & -0.1653 \\
0.0200 & 60 & -0.0978 & \textbf{-0.1950} & \textbf{-0.1684} & -0.1838 && -0.1630 & -0.1781 & -0.1760 & -0.1674 & \textbf{-0.1756} \\
\midrule
0.0050 & 80 & -0.1069 & -0.1888 & -0.1679 & -0.1792 && -0.1779 & -0.1779 & -0.1788 & -0.1765 & -0.1692 \\
0.0125 & 80 & -0.1036 & -0.1804 & -0.1682 & -0.1846 && -0.1695 & -0.1779 & -0.1762 & -0.1675 & -0.1662 \\
0.0200 & 80 & -0.0858 & -0.1907 & -0.1679 & -0.1703 && -0.1604 & -0.1781 & -0.1773 & -0.1673 & -0.1731 \\
\midrule
0.0050 & 100 & -0.1032 & -0.1914 & -0.1679 & -0.1815 && -0.1757 & -0.1779 & -0.1780 & -0.1676 & -0.1690 \\
0.0125 & 100 & -0.1045 & -0.1740 & -0.1679 & -0.1791 && -0.1582 & -0.1779 & -0.1765 & -0.1676 & -0.1662 \\
0.0200 & 100 & -0.0932 & -0.1909 & -0.1680 & -0.1523 && -0.1601 & -0.1781 & -0.1773 & -0.1673 & -0.1592 \\
\bottomrule
\multicolumn{12}{p{0.99\linewidth}}{\small\rule{0pt}{3ex}\textbf{Notes}: (1) Average validation loss of end-to-end systems over four training folds. (2) The lowest validation score of each system is \textbf{bolded}.}\\
\end{tabular}
\label{table:validation}
\end{table*}

\begin{table*}[!ht]
\caption{Validation loss in Experiment 5}
\centering
\begin{tabular}{ccccr@{}ccr@{}cc}
\toprule
$\eta$ &  Epochs & \multicolumn{2}{c}{Linear} && \multicolumn{2}{c}{2-layer} && \multicolumn{2}{c}{3-layer}\\[0.5ex]\cline{3-4} \cline{6-7} \cline{9-10}
\rule{0pt}{3ex}&& Nom. & DR && Nom. & DR && Nom. & DR\\
\midrule
0.0050 & 20 & -0.2255 & -0.1167 && -0.1376 &  0.0350 && \textbf{-0.0603} & -0.1373 \\
0.0125 & 20 & -0.2391 & -0.0725 && -0.1966 & -0.0433 && -0.0601 & -0.1968 \\
0.0200 & 20 & -0.2677 & \textbf{-0.2475} && \textbf{-0.3014} & \textbf{-0.2481} && -0.0408 & \textbf{-0.2576} \\
\midrule
0.0050 & 40 & -0.2483 & -0.2237 && -0.1377 &  0.0348 && -0.0603 & -0.0645 \\
0.0125 & 40 & -0.2740 & -0.2270 && -0.1965 & -0.2419 && -0.0601 & -0.1758 \\
0.0200 & 40 & -0.2564 & -0.2140 && -0.2481 & -0.2116 && -0.0404 & -0.0423 \\
\midrule
0.0050 & 60 & \textbf{-0.2796} & -0.2346 && -0.1377 & -0.1143 && -0.0603 & -0.1177 \\
0.0125 & 60 & -0.2135 & -0.1729 && -0.1965 & -0.2088 && -0.0602 & -0.0832 \\
0.0200 & 60 & -0.2628 & -0.1796 && -0.2545 & -0.2115 && -0.0405 & -0.1353 \\
\bottomrule
\multicolumn{10}{p{0.7\linewidth}}{\small\rule{0pt}{3ex}\textbf{Notes}: (1) Validation loss of end-to-end systems over a single training fold. (2) The lowest validation score of each system is \textbf{bolded}.}\\
\end{tabular}
\label{table:exp5_val}
\end{table*}

\end{appendices}

\end{document}